\definecolor{light-gray}{gray}{0.95}
\definecolor{tbl-row-color}{gray}{0.95}
\definecolor{keywordsColor}{RGB}{134, 14, 11}
\definecolor{commentsColor}{RGB}{54, 54, 54}
\lstdefinelanguage{OurLanguage}{
    alsoletter={:,=, <, >, &, |},
    keywords={while, end, types, if, else, else:, and, or, not, Normal},
    backgroundcolor=\color{light-gray},
    morekeywords={=, <, >, <=, >=, ==, !=, &&, ||},
    basicstyle={\ttfamily\small\normalfont},
    keywordstyle={\color{keywordsColor}\ttfamily\bfseries},
    comment=[l]{\#},
    commentstyle={\color{commentsColor}\ttfamily},
    autogobble=true,
    mathescape=true
}
\lstdefinestyle{program}{basicstyle=\small\ttfamily,keywordstyle=\bfseries}
\tikzstyle{startstop} = [rectangle, rounded corners, minimum width=3cm, minimum height=1cm,text centered, draw=black, fill=red!30]
\tikzstyle{io} = [rectangle, minimum width=4.5cm, minimum height=1.8cm, text centered,  text width=4cm, draw=black, fill=white!30]
\tikzstyle{process} = [rectangle, minimum width=3cm, text width=8cm, minimum height=1cm, text centered, draw=black, fill=orange!30]
\tikzstyle{decision} = [diamond, minimum width=3cm, minimum height=1cm, text centered, draw=black, fill=green!30]
\tikzstyle{arrow} = [thick,->,>=stealth, text width = 110]
\newcommand{\E}{\mathbb{E}}
\newcommand{\var}{\mathbb{V}\mathrm{ar}} 
\newcommand{\pr}{\mathbb{P}}
\newcommand{\real}{{\mathbb R}}
\newcommand{\nat}{{\mathbb N}}
\newcommand{\X}{{\mathbf X}}
\newcommand{\Y}{{\mathbf Y}}
\newcommand{\D}{{\mathbf D}}
\newcommand{\Z}{{\mathbf Z}}
\newcommand{\F}{{\mathbf F}}
\newcommand{\Ncal}{\mathcal{N}}
\newcommand{\Pcal}{\mathcal{P}}
\newtheorem*{remark}{Remark}
\definecolor{BrickRed}{HTML}{B6321C}
\definecolor{BlueViolet}{HTML}{473992}
\definecolor{Maroon}{HTML}{AF3235}
\definecolor{ForestGreen}{HTML}{009B55}
\begin{document}

\title{Exact and Approximate Moment Derivation for Probabilistic Loops With Non-Polynomial Assignments}
\author{Andrey Kofnov}
\email{andrey.kofnov@tuwien.ac.at}
\orcid{0000-0002-1734-2918}
\affiliation{%
  \institution{Faculty of Mathematics and Geoinformation, TU Wien}
  \streetaddress{Wiedner Hauptstrasse 8-10}
  \city{Vienna}
  \state{Vienna}
  \country{Austria}
  \postcode{1040}
}

\author{Marcel Moosbrugger}
\affiliation{%
  \institution{Faculty of Informatics, TU Wien}
  \streetaddress{Favoritenstrasse 11}
  \city{Vienna}
  \postcode{1040}
  \country{Austria}}
\email{marcel.moosbrugger@tuwien.ac.at}
\orcid{0000-0002-2006-3741}

\author{Miroslav Stankovi{\v{c}}}
\affiliation{%
  \institution{Faculty of Informatics, TU Wien}
  \city{Vienna}
  \postcode{1040}
  \country{Austria}}
\orcid{0000-0001-5978-7475}

\author{Ezio Bartocci}
\affiliation{%
 \institution{Faculty of Informatics, TU Wien}
 \streetaddress{}
 \city{Vienna}
  \postcode{1040}
  \country{Austria}}
\orcid{0000-0002-8004-6601}

\author{Efstathia Bura}
\affiliation{%
  \institution{Faculty of Mathematics and Geoinformation, TU Wien}
  \streetaddress{Wiedner Hauptstrasse 8-10}
  \city{Vienna}
  \state{Vienna}
  \postcode{1040}
  \country{Austria}}
\email{efstathia.bura@tuwien.ac.at}
\orcid{0000-0003-4972-5320}

\renewcommand{\shortauthors}{Kofnov et al.}

\begin{abstract}
Many stochastic continuous-state dynamical systems can be modeled as probabilistic programs with nonlinear non-polynomial updates in non-nested loops. We present two methods, one approximate and one exact, to automatically compute, without sampling, moment-based invariants for such probabilistic programs as closed-form solutions parameterized by the loop iteration. The exact method applies to probabilistic programs with trigonometric and exponential updates and is embedded in the \textsc{Polar} tool. The approximate method for moment computation applies to any nonlinear random function as it exploits the theory of polynomial chaos expansion to approximate non-polynomial updates as the sum of orthogonal polynomials. This translates the dynamical system to a non-nested loop with polynomial updates, and thus renders it conformable with the \textsc{Polar} tool that computes the moments of any order of the state variables. We evaluate our methods on an extensive number of examples ranging from modeling monetary policy to several physical motion systems in uncertain environments. The experimental results demonstrate the advantages of our approach with respect to the current state-of-the-art.
\end{abstract}

\begin{CCSXML}
<ccs2012>
   <concept>
       <concept_id>10003752.10010061.10010065</concept_id>
       <concept_desc>Theory of computation~Random walks and Markov chains</concept_desc>
       <concept_significance>300</concept_significance>
       </concept>
   <concept>
       <concept_id>10003752.10003753.10003757</concept_id>
       <concept_desc>Theory of computation~Probabilistic computation</concept_desc>
       <concept_significance>500</concept_significance>
       </concept>
   <concept>
       <concept_id>10003752.10010124.10010138.10010143</concept_id>
       <concept_desc>Theory of computation~Program analysis</concept_desc>
       <concept_significance>500</concept_significance>
       </concept>
 </ccs2012>
\end{CCSXML}

\ccsdesc[300]{Theory of computation~Random walks and Markov chains}
\ccsdesc[500]{Theory of computation~Probabilistic computation}
\ccsdesc[500]{Theory of computation~Program analysis}

\keywords{Probabilistic programs, Prob-solvable loops, Polynomial Chaos Expansion, Non-linear updates, Trigonometric updates, Exponential updates, Stochastic dynamical systems}


\maketitle

\section{Introduction}

Probabilistic programs (PPs) are modern tools to automate statistical modeling. They are becoming ubiquitous in AI applications, security/privacy protocols, and stochastic dynamical system modeling. PPs translate stochastic systems into programs whose execution gives rise to sets of random variables of unknown distributions. In the case of dynamical stochastic systems, corresponding PPs incorporate dynamics via loops, in which case, the distributions of the generated random quantities also vary along loop iterations. 
Automating statistical inference for these stochastic systems requires knowledge of their distribution; that is, the distribution(s) of the random variable(s) generated by executing the probabilistic program that encodes them.

Statistical moments are essential quantitative measures that characterize many probability distributions. In~\cite{Bartoccietal2019} the authors introduced the notion of \textit{Prob-solvable loops}, a class of probabilistic programs with a non-nested loop with polynomial updates and acyclic state variable dependencies for which it is possible to automatically compute moment-based invariants of any order over the program state variables as closed-form expressions in the loop iteration. This approach was first implemented in the \textsc{Mora}~\cite{BartocciKS20a} tool and later further improved in the \textsc{Polar} tool~\cite{Moosbruggeretal2022} to also support multi-path probabilistic loops with if-statements, symbolic constants, circular linear dependency among program state variables and drawing from distributions that depend on program state variables.  More recently, \cite{AmrollahiBKKMS22} proposed a method to handle more complex state variable dependencies that make both probabilistic and deterministic loops in general \emph{unsolvable}. Furthermore, the work in~\cite{KarimiMSKBB22} shows how to use a finite set of high-order moment-based invariants to estimate the probability distribution of the program's random variables. The core theory underlying all these approaches combines techniques from computer algebra such as symbolic summation and recurrence equations~\cite{KauersP11} with statistical methods.

Despite the successful application of these methods and tools in many different areas, including the analysis of consensus/security protocols~\cite{Moosbruggeretal2022}, inference problems in Bayesian networks~\cite{StankovicBK22,BartocciKS20a} and automated probabilistic program termination analysis~\cite{MoosbruggerBKK21,MoosbruggerBKK21b}, they fail when modeling more complex dynamics that require non-polynomial updates. Such examples are depicted in Fig.~\ref{fig:taylor_model}, 
where the updates of the variables contain the logarithmic function, and in Fig.~\ref{fig:turningexample}, where modeling the physical motion of a vehicle requires trigonometric functions.
Thus, how to leverage the class of \textit{Prob-solvable loops} to compute moment-based invariants as closed-form expressions in probabilistic loops with non-polynomial updates remains an open research problem.

In preliminary work presented at QEST 2022~\cite{KofnovMSBB22}, we provided a solution to this problem leveraging the theory of \emph{general Polynomial Chaos Expansion} (gPCE) \cite{XiuKarniadakis2002a}, which consists of decomposing a non-polynomial random function into a linear combination of orthogonal polynomials. gPCE theory, upon which our approach is based, assures that the polynomial approximation of non-polynomial square-integrable functions converges to the truth by increasing the degree of the polynomial and guarantees the estimation of moments of random variables with complex probability distributions. Once such a polynomial approximation is applied, we take advantage of the work in \cite{Bartoccietal2019,BartocciKS20a} to automatically estimate the moment-based invariants of the loop state variables as closed-form solutions. 
 In Fig.~\ref{fig:taylor_model} we illustrate our gPCE-based approach via the Taylor rule in monetary policy, where we estimate the expected interest rate  given a target inflation rate and the gross domestic product (GDP). In this example, we approximate the original log function with $5$th degree polynomials and obtain a Prob-solvable loop.  This enables the automatic computation of the gPCE approximation of the moments in closed-form  at each loop iteration ($n$) using the approach proposed in~\cite{Bartoccietal2019}. 
 
 Trigonometric functions are prevalent in stochastic dynamical systems of motion. The exponential function is directly related to trigonometric functions as well as to characteristic functions of distributions. We combine the methodology proposed in ~\cite{Jasouretal2021} with Prob-Solvable loops to obtain exact moments of trigonometric and exponential functions of random variables at loop iteration.
 Fig.~\ref{fig:turningexample} presents the PP encoding of the stochastic dynamical model of a turning vehicle that requires trigonometric updates. Our new approach incorporates results in~\cite{Jasouretal2021} and computes moments of all orders in closed form as a function of iteration number.
 In the QEST 2022 paper~\cite{KofnovMSBB22}, we provided only gPCE-based moment estimates of trigonometric and exponential updates. Fig.~\ref{fig:turningexample} shows how our novel approach provides the exact expected trajectory $(x_n,y_n)$ as a function of the loop iteration $n$. Although the expected trajectory moves from left to right, we can see that some of the sampled trajectories in effect turn backward.
 This underlines the necessity for high-order statistical moments, which our approaches are able to compute.
 
 \paragraph{Paper contribution.} This paper  extends and improves our previous QEST 2022 conference~\cite{KofnovMSBB22} manuscript with the following new contributions:
\begin{itemize}
    \item[(i)] \cite{Jasouretal2021} developed a method that obtains the exact time evolution of the moments of random states for a class of dynamical systems that depend on trigonometric updates. We amended their approach and make it compatible with the \texttt{Polar} tool \cite{Moosbruggeretal2022}. Specifically, we  incorporated the approach of~\cite{Jasouretal2021} into \emph{Prob-solvable loops} when updates involve trigonometric functions. This allows us to automatically compute the \emph{exact} moments of any order and at all iterations. Moreover, we extended \cite{Jasouretal2021} to include \emph{exponential} updates. We present the new methodological material in Sec.~\ref{sec:exact}.
    \item[(ii)] We rewrote the abstract and the introduction to reframe our work with the new material.
    We updated \emph{Related Work} to reflect our new contributions and compare them with the current state-of-the-art. We revised the text of all the other sections adding new examples.
    \item[(iii)] 
    We have considerably improved and expanded the evaluation section by adding six benchmark models to the previous five in \cite{KofnovMSBB22}.
    Moreover, we extend our original evaluation by including comparisons for our newly proposed exact method.
\end{itemize}

\begin{figure}[!t]
\centering
  \includegraphics[scale=.40]{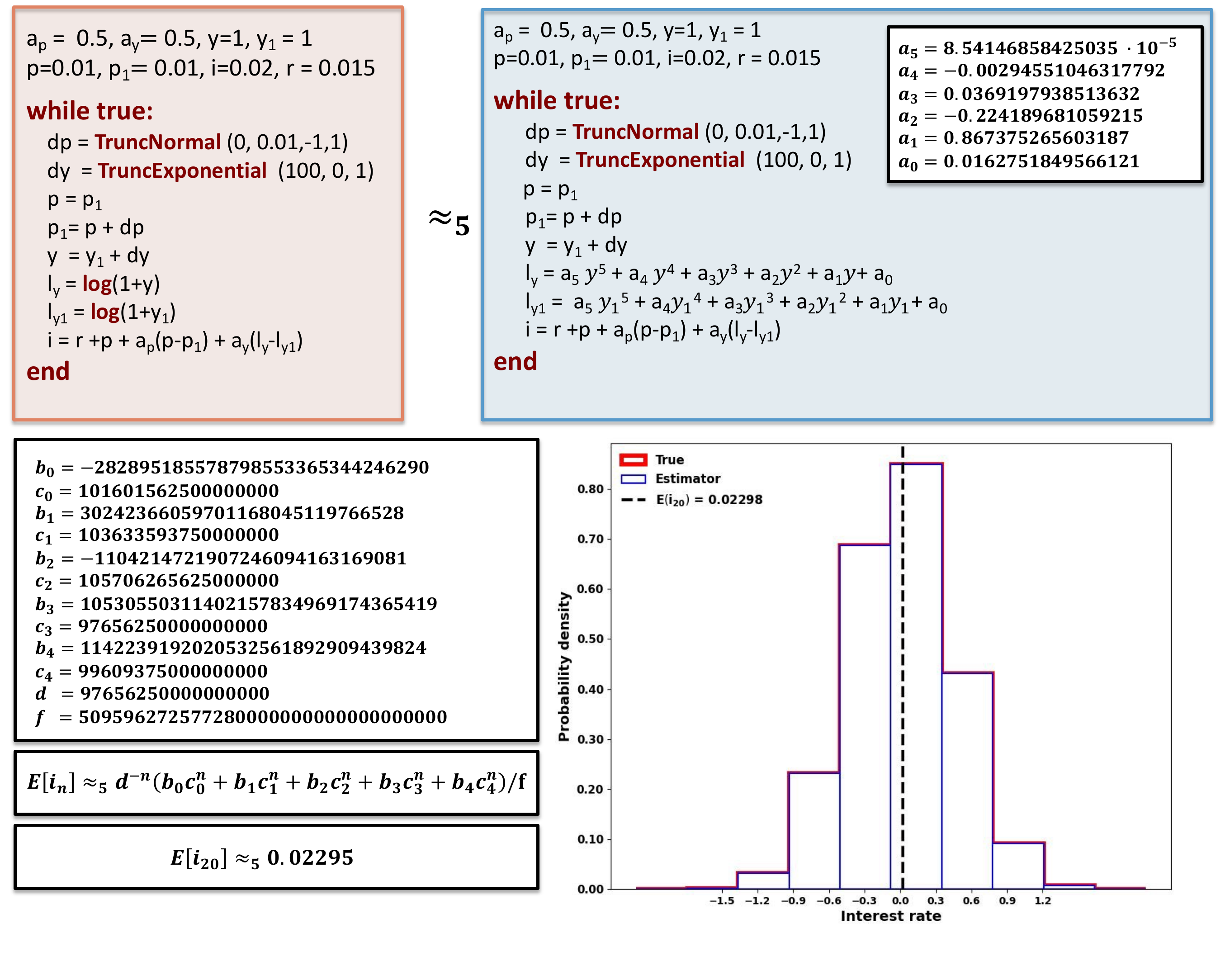}
  \caption{The probabilistic loop in the top left panel encodes the Taylor rule~\cite{Taylor1993}, an equation that prescribes a value for the short-term interest rate based on a target inflation rate and the gross domestic product.
  The program uses a non-polynomial function (log) in the loop body to update  the continuous-state variable ($i$). The top right panel contains the \emph{Prob-Solvable loop} (with polynomial updates) obtained by approximating the log function using polynomial chaos expansion (up to $5$th degree).  In the bottom left, we compute the expected interest rate ($\E[i_n]$) as a closed-form expression in loop iteration $n$ using the Prob-solvable loop and evaluate it at $n=20$.  In the bottom right panel, we compare the true and  estimated distributions for a fixed iteration (we sample the loop  $10^6$ times at iteration $n=20$).} 
  \label{fig:taylor_model}
\end{figure}

\begin{figure}[!t]
\centering
  \includegraphics[scale=.40]{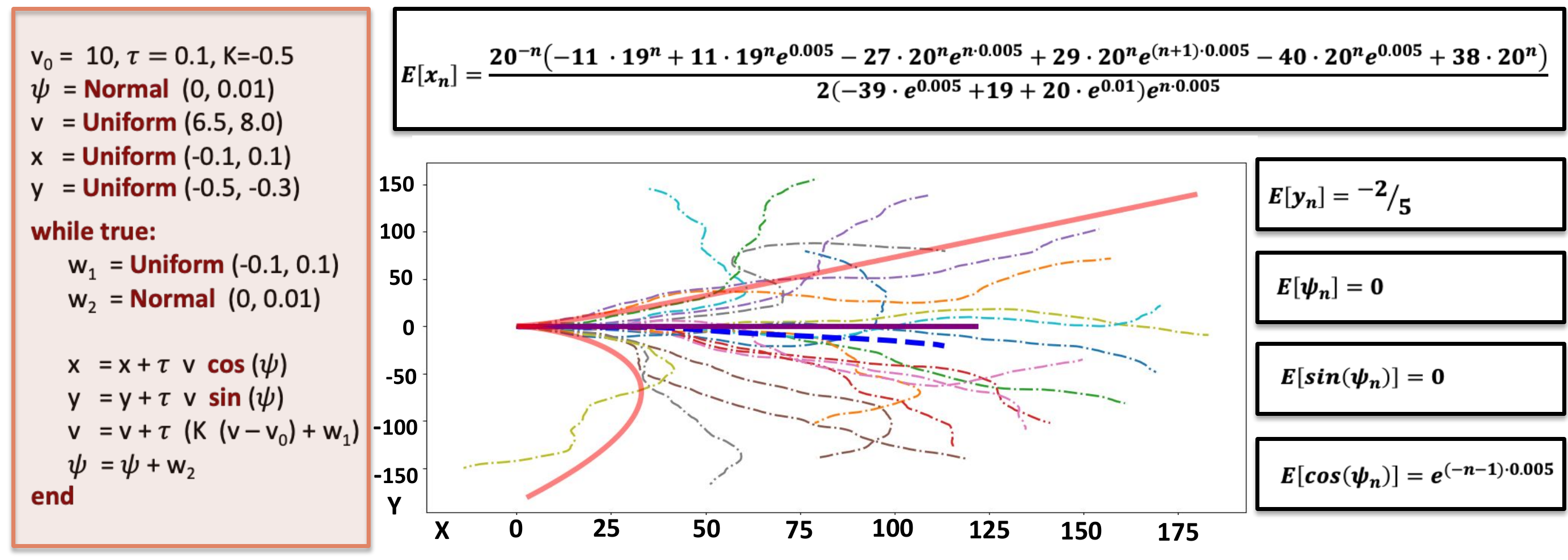}
  \caption{On the left is a probabilistic loop modeling the behavior of a turning vehicle~\cite{Srirametal2020} using non-polynomial (cos, sin) updates in the loop body. On the top right is the exact expected position $(x_n,y_n)$ and other exact expected values
  computed automatically in closed-form in the number of loop iterations $n$.
  The plot in the center contains $20$ sampled trajectories $(x_n,y_n)$ up to iteration $n=201$ (dashdot lines with different colors) and the approximated expected trajectory computed by averaging the sampled ones (dashed blue line).
  Moreover, we automatically computed the exact expected trajectory and standard deviation with our method. The solid purple line marks the exact expected trajectory.
  The two solid red lines mark the boundary of the region contained within $\pm$ two standard deviations of the expected trajectory.}
  \label{fig:turningexample}
\end{figure}

\paragraph{Related Work.} \emph{Taylor series models}~\cite{Revol2005,Neher2007,Chenetal12} are a well-established computational tool in reachability analysis for (non-probabilistic) non-linear dynamical systems that combine the polynomial approximation of Taylor series expansion
with the error intervals to over-approximate the set of dynamical trajectories for a finite time horizon.
\cite{Srirametal2020} approximates non-polynomial functions of random variables as polynomials using Taylor series expansion. Other works~\cite{Stankovic1996,Triebel2001} follow a similar approach.  Such approximations work well in the neighborhood of a point, requiring otherwise a polynomial with a high degree to maintain a good level of accuracy. This approach is not suitable for approximating functions with unbounded support.

Our two approaches, exact and gPCE based, have several advantages over the method proposed in~\cite{Srirametal2020}.  First, in contrast, to~\cite{Srirametal2020}, neither of our methods is  limited to a fixed iteration. Instead, we compute closed-form expressions in the number of loop iterations. Second, the interval estimates in ~\cite{Srirametal2020} become larger after a few iterations. Our exact moment calculation for the same models, involving trigonometric functions, is not affected by iteration number and is exact (i.e., incurs no error) at all iterations.

The method of \cite{Jasouretal2021}, which we adjusted, extended, and made compatible with \texttt{Polar}, concerns discrete-time stochastic nonlinear dynamical systems subjected to probabilistic uncertainties. \cite{Jasouretal2021} focused on nonlinear autonomous and robotic systems where motion dynamics are described in terms of translational and rotational motions over the planning horizon. The latter naturally led to the introduction of trigonometric and mixed-trigonometric-polynomial moments in order to obtain an exact description of the moments of uncertain states. This method computes exact moments but it can only handle systems encoded in PPs, where all nonlinear transformations take standard, trigonometric, or mixed-trigonometric polynomial forms.
Our approximate gPCE-based approach instead is applicable to general non-polynomial updates (e.g. containing logarithms).
Furthermore, the work of \cite{Jasouretal2021} only considers the computation of moments up to a fixed horizon.
In contrast, for systems that can be modeled as Prob-solvable loops, both our methods provide closed-form expressions parameterized by the number of loop iterations.

Polynomial chaos expansion based methods have been extensively used for uncertainty quantification in different areas, such as engineering problems of solid and fluid mechanics (e.g. \cite{GhanemSpanos1991,Fooetal2007,Houetal2006}), computational fluid dynamics (e.g., \cite{Knio2006}), flow through porous media \cite{GhanemDham1998,Ghanem1998}, thermal problems
 \cite{HienKleiber1997}, analysis of turbulent velocity fields \cite{Chorin1974,MeechamJeng1968}, differential equations (e.g., \cite{WanKarniadakis2005,XiuKarniadakis2002a}), and, more recently, geosciences and meteorology (e.g., \cite{Formaggiaetal2013,Giraldietal2017,Denamieletal2020}).

\paragraph{Outline.} Sec.~\ref{sec:preliminaries} provides the necessary background on \emph{Prob-solvable Loops} and the theory of \emph{general Polynomial Chaos Expansion} (gPCE).  Sec.~\ref{sec:PCE_algo} introduces our gPCE-based approximation method presenting the conditions that are necessary to accurately approximate general non-polynomial updates in a probabilistic loop. Sec.~\ref{sec:genfun} shows how to obtain a Prob-solvable loop using our approximation method and hence how to automatically compute moment-based invariants of all orders for the program state variables.  Sec.~\ref{sec:exact} presents the exact method leveraging the theory in~\cite{Jasouretal2021} to compute the exact moments of PPs with trigonometric and exponential updates. Sec.~\ref{sec:evaluation} evaluates the accuracy and feasibility of the proposed approaches over several benchmarks comparing them with the state-of-the-art. We conclude in Sec.~\ref{sec:conclusion}.

\section{Preliminaries}\label{sec:preliminaries}

We assume the reader to be familiar with basic probability theory.
For more details, we refer to \cite{Durrett2019}.

\subsection{Prob-Solvable Loops}\label{sec:prob-solvable-loops}

\cite{Bartoccietal2019} defined the class of \emph{Prob-solvable loops} for which  moments of all orders of program variables can be computed symbolically: given a Prob-solvable loop and a program variable $x$, their method computes a closed-form solution for $\E(x_n^k)$ for arbitrary $k \in \nat$, where $n$ denotes the $n$th loop iteration.
Prob-solvable loops are restricted to polynomial variable updates.

\begin{definition}[Prob-solvable loops \cite{Bartoccietal2019}]\label{def:probsolvable}
Let $m \in \nat$ and $x_1,\ldots x_m$ denote real-valued program
variables. 
A Prob-solvable loop with program variables $x_1,\ldots x_m$ is a loop of the form
\begin{equation*}\label{eq:ProbModel}
  I; \texttt{ while true: } U \texttt{ end}, \quad \text{such that}
\end{equation*}
\begin{itemize}
    \item $I$ is a sequence of initial assignments over a subset of $\{x_1,\ldots, x_m\}$. The initial values of $x_i$ can be drawn from a known distribution. 
    They can also be real constants.
    \item $U$ is the loop body and a sequence of $m$ random updates, each of the form,
    \begin{equation*}\label{eq:ProbModel:prob_assignments}
        x_i = \textit{Dist} \quad \text{or} \quad x_i = a_i x_i + P_{i}(x_1,\dots x_{i-1}),
    \end{equation*}
    where $a_i \in \real$, $P_{i} \in \real[x_1,\ldots,x_{i-1}]$ is a polynomial over program
    variables $x_1,\ldots,x_{i-1}$ and \textit{Dist} is a random variable whose distribution is independent of program variables with computable moments.
    $a_i$ and the coefficients in $P_i$ can be random variables with the same constraints as for \emph{Dist}.
\end{itemize}
\end{definition}

The syntax of Prob-solvable loops as defined in Definition~\ref{def:probsolvable} is restrictive.
For instance, an assignment for a variable $x_i$ must not reference variables $x_j$ with $j > i$.
Hence, the structural dependencies among program variables are acyclic.
Some of these syntactical restrictions were lifted in a later work \cite{Moosbruggeretal2022} to support distributions depending on program variables, if-statements, and linear cyclic dependencies.
The latter means that polynomial assignments can be of the form $x_i = L_{i}(x_1, \dots, x_n) + P_{i}(x_1,\dots x_{i-1})$, where $P_{i}$ is a polynomial, and $L_{i}$ is a linear function, as long as all program variables in $L_{i}(x_1, \dots, x_n)$ with non-zero coefficient depend only linearly on $x_i$.
In this work, we utilize this relaxation and allow for linear cyclic dependencies in Prob-solvable loops.

Many real-life systems exhibit non-polynomial dynamics and require more general updates, such as, for example, trigonometric or exponential functions.
In this work, we develop two methods -- one approximate, one exact -- that allow the modeling of non-polynomial assignments in probabilistic loops by polynomial assignments.
Doing so allows us to use the \emph{Prob-solvable loop} based methods in \cite{Bartoccietal2019,Moosbruggeretal2022} to compute the moments of the stochastic components of a much broader class of systems.
Our method for exact moment derivation for probabilistic loops with non-polynomial functions builds upon Prob-solvable loops.
In contrast, our PCE-based approach, described in the following sections, is not limited to Prob-solvable and can be used in more general probabilistic loops.
The only requirement is that the loops satisfy the conditions in Section~\ref{sec:conditions}.

\subsection{Polynomial Chaos Expansion}\label{sec:PCE}

Polynomial chaos expansion (PCE) \cite{Ernstetal2012,XiuKarniadakis2002a}
recovers a random variable in terms of a linear combination of functionals whose entries are known random variables, sometimes called germs, or, basic variables. 
Let $(\Omega, \Sigma, \pr)$ be a probability space, where $\Omega$ is the set of elementary events, $\Sigma$ is a $\sigma$-algebra of subsets of $\Omega$, and $\pr$  is a probability measure on $\Sigma$.
Suppose $X$ is a real-valued random variable defined on $(\Omega, \Sigma, \pr)$, such that 
\begin{align}\label{l2fcn}
\E(X^2) &=\int_{\Omega} X^2(\omega) d\pr(\omega) < \infty.
\end{align}
The space of all random variables $X$  satisfying \eqref{l2fcn} is denoted by $L^2(\Omega, \Sigma, \pr)$. That is, the elements of $L^2(\Omega, \Sigma, \pr)$  are real-valued random variables defined on $(\Omega, \Sigma, \pr)$ with finite second moments. If we define  the inner product as
 $   \E(XY)=(X,Y)=\int_{\Omega} X(\omega) Y(\omega) d\pr(\omega)$ 
and norm $||X||=\sqrt{\E(X^2)}=\sqrt{\int_{\Omega} X^2(\omega) d\pr(\omega)}$, then $L^2(\Omega, \Sigma, \pr)$ is a Hilbert space; i.e., an infinite dimensional linear space of functions endowed with an inner product and a distance metric.  
 Elements of a Hilbert space can be uniquely identified by their coordinates with respect to an orthonormal basis of functions, in analogy with Cartesian coordinates in the plane. Convergence with respect to the norm $||\cdot||$ is called \textit{mean-square convergence}. A particularly important feature of a Hilbert space is that when the limit of a sequence of functions exists, it belongs to the space.

The elements in $L^2(\Omega, \Sigma, \pr)$ can be classified in two groups:  \textit{basic}  and   \textit{generic} random variables, which we want to decompose using the elements of the first set of basic variables. \cite{Ernstetal2012} showed that the basic random variables that can be used in the decomposition of other functions have finite moments of all orders with continuous probability density functions (pdfs). 

The $\sigma$-algebra generated by the basic random variable $Z$ is denoted by $\sigma(Z)$.
Suppose we restrict our attention to decompositions of a random variable $X=g(Z)$, where $g$ is a function with  $g(Z) \in L^2(\Omega, \sigma(Z),\pr)$, and the basic random variable $Z$ determines the class of orthogonal polynomials $\{\phi_i(Z), i \in \mathbb{N}\}$,
\begin{align}
\left<\phi_i(Z),\phi_j(Z)\right>&= \int_{\Omega} \phi_i(Z(\omega))\phi_j(Z(\omega)) d\pr(\omega) 
=\int \phi_i(x)\phi_j(x) f_Z(x) dx=\begin{cases} 1 & i=j \\
0 & i \ne j \end{cases} \label{orth.poly}
\end{align}
where $f_Z$ denotes the pdf of $Z$. The set $\{\phi_i(Z), i \in \mathbb{N}\}$ is a polynomial chaos basis.

If $Z$ is normal with mean zero, the Hilbert space $L^2(\Omega, \sigma(Z),\pr)$ is called \emph{Gaussian} and the related set of polynomials is represented by the family of Hermite polynomials (see, for example, \cite{XiuKarniadakis2002a}) defined on the whole real line. Hermite polynomials form a basis of $L^2(\Omega, \sigma(Z),\pr)$. Therefore, every random variable $X$ with  finite second moment can be approximated  by the truncated PCE 
\begin{align}\label{trunc.PCE}
    X^{(d)}&=\sum_{i=0}^d c_i \phi_i(Z),
\end{align}
for suitable coefficients $c_i$ that depend on the random variable $X$. The truncation parameter $d$ is the highest polynomial degree in the expansion. Since the polynomials are orthogonal,
\begin{align}
    c_i&=\frac{1}{||\phi_i||^2} \left< X,\phi_i \right> =\frac{1}{||\phi_i||^2}\left<g,\phi_i\right> 
    =\frac{1}{||\phi_i||^2} \int_{\real} g(x)\phi_i(x) f_Z(x) dx.
\end{align}
The truncated PCE of $X$ in \eqref{trunc.PCE} converges in mean square to $X$ \cite[Sec. 3.1]{Ernstetal2012}. The first two moments of \eqref{trunc.PCE} are determined by
 $   \E(X^{(d)})= c_0,$ and $ \var(X^{(d)})= \sum_{i=1}^d c_i^2 ||\phi_i||^2.$
 
Representing a random variable by a series of Hermite polynomials in a countable sequence
of independent Gaussian random variables is  known as Wiener–Hermite polynomial chaos expansion.  In applications of Wiener–Hermite PCEs, the underlying Gaussian Hilbert space is
often taken to be the space spanned by a  sequence $\{Z_i, i \in \mathbb{N}\}$ of independent standard Gaussian basic random variables; i.e.,  $Z_i \sim \Ncal(0, 1)$. For computational purposes, the
countable sequence $\{Z_i, i \in \mathbb{N}\}$ is restricted to a finite number $k \in \mathbb{N}$  of random variables. The Wiener–Hermite PCE converges for 
random variables with finite second moment. Specifically, for any random variable $X \in L^2(\Omega, \sigma(\{Z_i, i \in \mathbb{N}\}), \pr)$, the approximation \eqref{trunc.PCE} satisfies
\begin{align}\label{convergence}
    X_k^{(d)}=\sum_{i=0}^d a_i \phi_i(Z_1,\ldots,Z_k) &\to X \quad \mbox{as } d,k \to \infty
\end{align} 

in mean-square convergence (see \cite{Ernstetal2012}). 
The distribution of $X$ can be quite general; e.g., discrete, singularly continuous, absolutely continuous as well as of mixed type.

\section{Polynomial Chaos Expansion Algorithm}\label{sec:PCE_algo}
\subsection{Random Function Representation}\label{sec:conditions}

In this section, we state the conditions under which the estimated polynomial is an unbiased and consistent estimator and has exponential convergence rate.
Suppose $k$ continuous random variables  $Z_{1},\ldots, Z_{k}$ are used to introduce stochasticity in a PP with corresponding cumulative distribution functions (cdfs) $F_{Z_i}$, $i=1,\ldots,k$. Also, suppose all $k$ distributions have probability density functions, and let  $\Z=(Z_{1},\ldots,Z_{k})$ with cdf $F_{\Z}$. We assume that the elements of $\Z$ satisfy the following conditions:

\begin{itemize}
    \item[(A)] $Z_{i}$, $i=1,\ldots, k$, are independent.
    \item[(B)] We consider functions $g$ such that $g(\Z) \in L^2(\mathcal{Q}, F_\Z)$, where $\mathcal{Q}$ is the  support of the joint distribution of  $\Z=(Z_{1},\ldots,Z_{k})$.\footnote{$\Omega$ is dropped from the notation as the sample space is not important in our formulation.}
    \item[(C)] All random variables $Z_{i}$ have distributions that are uniquely characterized by their moments.\footnote{Conditions that ascertain this are given in Theorem 3.4 of \cite{Ernstetal2012}.}
\end{itemize}

Under condition (A),  the joint cdf of the components of $\Z$ is $F_{\Z}= \prod_{i=1}^k F_{Z_i}$. 
To ensure the construction of unbiased estimators with optimal exponential convergence rate (see \cite{XiuKarniadakis2002a}, \cite{Ernstetal2012}) in the context of probabilistic loops, we further introduce the following assumptions:

\begin{itemize}
    \item[(D)] $g$ is a function of a fixed number of basic variables (arguments) over all loop iterations.
    \item[(E)] If $\Z(j)=(Z_{1}(j),\ldots,Z_{k}(j))$ is the stochastic argument of $g$ at iteration $j$, then $F_{Z_i(j)}(x) = F_{Z_{i}(l)}(x)$ for all pairs of iterations $(j,l)$ and $x$ in the support of $F_{Z_i}$.
\end{itemize}

If Conditions (D) and (E) are not met, then the polynomial coefficients in the PCE need be computed for each loop iteration individually to ensure optimal convergence rate.
It is straightforward to show the following proposition.

\begin{proposition}\label{propos::func_props}
If $\Z=(Z_{1},\ldots, Z_{k_1})$, $\Y=(Y_{1},\ldots, Y_{k_2})$ satisfy conditions (A), (C) and (E) and are mutually independent, and functions $g$ and $h$ satisfy conditions (B) and (D),  then their sum, $g(\Z) + h(\Y)$, and product, $g(\Z) \cdot h(\Y)$, also satisfy conditions (B) and (D).
\end{proposition}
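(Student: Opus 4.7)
The plan is to verify conditions (B) and (D) separately for both the sum and the product, making explicit use of the mutual independence of $\Z$ and $\Y$.

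First I would handle condition (D), which is essentially a bookkeeping step. By assumption $g$ has a fixed number $k_1$ of arguments across iterations, and $h$ has a fixed number $k_2$. Because $\Z$ and $\Y$ are mutually independent tuples with no shared components, the maps $(\z,\y)\mapsto g(\z)+h(\y)$ and $(\z,\y)\mapsto g(\z)\cdot h(\y)$ are functions of exactly $k_1+k_2$ basic variables, and this count is stable over loop iterations since $k_1$ and $k_2$ are. So (D) transfers immediately to both the sum and the product.

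Next I would verify condition (B) for the sum. Let $\mathcal{Q}_\Z$ and $\mathcal{Q}_\Y$ denote the supports of $F_\Z$ and $F_\Y$ respectively, so that by mutual independence the joint support is $\mathcal{Q}_\Z\times\mathcal{Q}_\Y$ with product distribution $F_\Z\otimes F_\Y$. Using the elementary bound $(a+b)^2\le 2a^2+2b^2$ together with Fubini's theorem, I would write
\begin{equation*}
\int_{\mathcal{Q}_\Z\times\mathcal{Q}_\Y}\!\!\bigl(g(\z)+h(\y)\bigr)^2\,dF_\Z(\z)\,dF_\Y(\y)\le 2\!\int_{\mathcal{Q}_\Z}\!g(\z)^2 dF_\Z+2\!\int_{\mathcal{Q}_\Y}\!h(\y)^2 dF_\Y,
\end{equation*}
which is finite by hypothesis (B) on $g$ and $h$. (Equivalently, Minkowski's inequality in $L^2$ of the product space gives $\|g+h\|_2\le\|g\|_2+\|h\|_2<\infty$.)

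For the product, mutual independence of $\Z$ and $\Y$ is doing the real work: Fubini gives
\begin{equation*}
\int_{\mathcal{Q}_\Z\times\mathcal{Q}_\Y}\!\!\bigl(g(\z)h(\y)\bigr)^2\,dF_\Z(\z)\,dF_\Y(\y)=\left(\int_{\mathcal{Q}_\Z}\!g(\z)^2 dF_\Z\right)\!\left(\int_{\mathcal{Q}_\Y}\!h(\y)^2 dF_\Y\right)<\infty,
\end{equation*}
so $g(\Z)\cdot h(\Y)\in L^2(\mathcal{Q}_\Z\times\mathcal{Q}_\Y,F_\Z\otimes F_\Y)$. Combining this with the argument-counting step completes the proof. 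I do not anticipate a genuine obstacle: the proposition is essentially closure of $L^2$ under sums (Minkowski) and, for independent arguments, under products; the only subtle point is noting that mutual independence is precisely what factorizes the expectation in the product case, which is why the hypothesis is stated that way.
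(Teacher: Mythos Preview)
Your proof is correct and complete. The paper itself does not give a proof, remarking only that the proposition is ``straightforward to show''; your argument is precisely the standard verification the authors presumably had in mind, using Minkowski (or the elementary $(a+b)^2\le 2a^2+2b^2$ bound) for the sum and the factorization via independence and Fubini for the product.
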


\begin{figure}[t!]
\resizebox{10cm}{10cm}{%
\hspace{-0.5cm}
\begin{tikzpicture}[node distance=2cm]
\node (start) [startstop] {Start};
\node (in1) [io] {\textbf{Input:}
    $\begin{cases}
      Z_{1}, \hspace{0.5cm} \bar{d}_{1},\\
      Z_{2}, \hspace{0.5cm} \bar{d}_{2},\\
      ...\\
      Z_{k}, \hspace{0.5cm} \bar{d}_{k},\\
    \end{cases}$\vspace{0.5cm}
    
Set of random variables and the highest degrees of the corresponding  univariate orthogonal polynomials

};
    
\node (in2) [io, right of=in1, xshift = 4cm] {\textbf{Input}
    $\begin{cases}
      f_{1}(z_{1}), \hspace{0.5cm} \left[a_{1}, b_{1}\right],\\
      f_{2}(z_{2}), \hspace{0.5cm} \left[a_{2}, b_{2}\right],\\
      ...\\
      f_{k}(z_{k}), \hspace{0.5cm} \left[a_{k}, b_{k}\right],\\
    \end{cases}$\vspace{0.5cm}
Set of the density functions and the corresponding supports
};

\node (in3) [io, right of=in2, xshift = 4cm] {\textbf{Input} \\\vspace{0.1cm}
$g(z_{1}, ..., z_{k})$\vspace{0.5cm}

Target function

};

\node (pro1) [process, below of=in2, yshift = -3cm] {\textbf{Orthonormal polynomials:}\\
\vspace{0.2cm}
$\begin{cases}
    \{\bar{p}_{1}^{deg}\}_{deg=0}^{\bar{d}_{1}}\\
    ...\\
    \{\bar{p}_{k}^{deg}\}_{deg=0}^{\bar{d}_{k}}\\
\end{cases}$

};

\node (pro2) [process, below of=in1, yshift = -12.7cm] {
\[
\D^{L \times k} = 
\begin{blockarray}{ccccc}
Z_{1} & Z_{2} & Z_{3} & ... & Z_{k} \\
\begin{block}{(ccccc)}
  0 & 0 & 0 & 0 & 0 \\
  1 & 0 & 0 & 0 & 0 \\
  \vdots & \vdots & \vdots  & \ddots & \vdots\\
  \bar{d}_{1} & \bar{d}_{2} & \bar{d}_{3} & \cdots & \bar{d}_{k} \\
\end{block}
\end{blockarray}
 \]
};

\draw [arrow, line width=1mm, color = green!50, text = black] (in2) -- node[anchor=east] {\begin{center}
    \textbf{1: Gram-Schmidt Process}
\end{center}} (pro1);
\draw [arrow, line width=1mm, color = green!90] (in1) |-  (pro1);

\begin{scope}[transform canvas={xshift=-1cm}]
  \draw [arrow, line width=1mm, color = brown!90, text = black] (in1) -- node[anchor=east] {\vspace{5cm}
\begin{center}
    \textbf{2:  Matrix of\\  polynomials' combinations}
\end{center}: } (pro2);
\end{scope}

\node (pro3) [process, below of=in3, yshift = -6.5cm] {\textbf{Fourier coefficients}:\\
$c_{j} = \int\limits_{a_{1}}^{b_{1}}...\int\limits_{a_{k}}^{b_{k}}g(z_{1}, ...z_{k})\prod\limits_{i=1}^{k}\left[f_{Z_{i}}(z_{i})\bar{p}^{d_{ji}}_{i}(z_{i})\right]dz_{k}...d_{z_{1}}$

};

\draw [arrow, line width=1mm, color = blue!90, text = black] (in3) --  node[anchor=west] {\vspace{3cm}
\begin{center}
    \textbf{3: \textbf{Calculation of\\ coefficients}}
\end{center} } (pro3);

\begin{scope}[transform canvas={xshift=1cm}]
  \draw [arrow, line width=1mm, color = blue!90, text = black, shorten <= -1cm] (pro2) -| (pro3);
\end{scope}

\draw [arrow, line width=1mm, color = blue!90, text = black] (pro1) -- (pro3);

\node (out1) [io, below of=pro1, yshift=-5cm] {\textbf{Output:} \\
\vspace{0.2cm}
$\sum\limits_{j} c_{j}\prod\limits_{i} \bar{p}_{i}^{d_{ji}}$
};

\draw [arrow, line width=1mm] (pro3) |- (out1);

\draw [arrow, line width=1mm] (pro1) -- node[anchor=west] {\vspace{3cm}
\begin{center}
    \textbf{4: \textbf{Summation of weighted polynomials}}
\end{center} } (out1);
\end{tikzpicture}
}
\caption{Illustration of PCE algorithm}
\label{fig::PCE_Algo}
\end{figure}
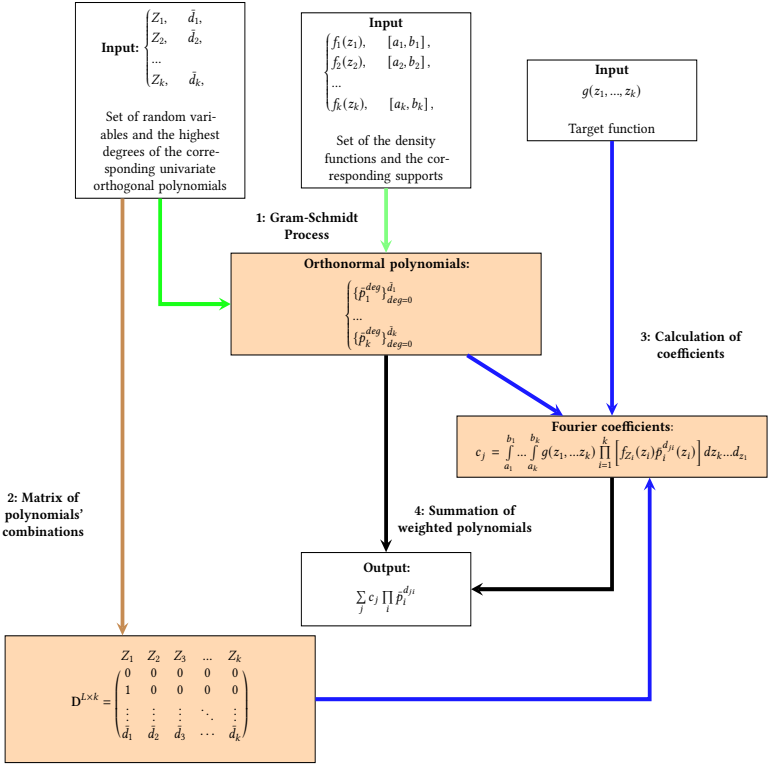


\subsection{PCE Algorithm}\label{sec:PCEalgo}

Let $Z_1,\ldots,Z_k$ be independent continuous random variables, with respective cdfs $F_i$, satisfying conditions (A), (B) and (C). Then, $\Z=(Z_1,\ldots,Z_k)^T$ has cdf $\F_{\Z} = \prod_{i=1}^k F_{i}$. Let $\mathcal{Q}$ denote the support of $\F_{\Z}$. The function $g: \real^{k} \to \real$, with $g \in L^2(\mathcal{Q}, \F)$, can be approximated  with the truncated orthogonal polynomial expansion, as described in  Fig. \ref{fig::PCE_Algo}, 
\begin{align}\label{polyexpanse}
g(\Z) &\approx \hat{g}(\Z) = \sum_{\substack{d_i \in \{0,\ldots,\bar{d}_i\},\\ i=1,\ldots, k}} c(d_1,\ldots,d_k) z_1^{d_1}  \cdots z_k^{d_k} 
= \sum_{j=1}^L c_{j}\prod\limits_{i=1}^{k}\bar{p}_{i}^{d_{ji}}(z_{i}), 
\end{align}
where 
\begin{itemize}
 \item $\bar{p}_{i}^{d_{ji}}(z_{i})$ is a polynomial of degree $d_{ji}$, and belongs to the set of orthogonal polynomials with respect to $F_{Z_i}$ that are calculated with the Gram-Schmidt orthogonalization procedure\footnote{Generalized PCE typically entails using orthogonal basis polynomials specific to the distribution of the basic variables, according to the Askey scheme of \cite{XiuKarniadakis2002a,Xiu2010}. We opted for the most general procedure that can be used for any basic variable distribution.};
    \item $\bar{d}_{i} = \max\limits_{j}(d_{ji})$ is the highest degree of the univariate orthogonal polynomial, for $i=1,\ldots,k$; 
    \item $L = \prod\limits_{i=1}^{k}(1 + \bar{d}_{i})$ is the total number of multivariate orthogonal polynomials and equals the truncation constant;
    \item $c_{j}$ are the Fourier coefficients.
\end{itemize}

The Fourier coefficients are calculated using
\begin{align}\label{coef}
c_{j} &= \int\limits_{\mathcal{Q}}g(z_{1},...,z_{k})p_{i}^{d_{ji}}(z_{i})d\F = 
\idotsint\limits_{\mathcal{Q}} g(z_{1},...,z_{k})\left(\prod\limits_{i=1}^{k}\bar{p}_{i}^{d_{ji}}(z_{i})\right) dF_{Z_k}...dF_{Z_1},
\end{align}
by Fubini's theorem. 

\begin{example}
Returning to the Turning vehicle model in Fig.~\ref{fig:turningexample}, 
the non-polynomial functions to approximate are $g_1 = cos$ and $g_2 = sin$ from the updates of program variables $y, x$, respectively. In both cases, we only need to consider a single basic random variable, $Z \sim \Ncal(0,0.01)$ ($\psi$ in Fig.~\ref{fig:turningexample}). 

Using polynomials of degree up to $5$,
\eqref{polyexpanse} has the following form for the two functions,
\begin{equation}\label{eqn:g_1}
    \hat{g}_1(z) = cos(\psi) = a_0 + a_1\psi + ... + a_5\psi^5
\end{equation}
and
\begin{equation}\label{eqn:g_2}
    \hat{g}_2(z) = sin(\psi) = b_0 + b_1\psi + ... + b_5\psi^5.
\end{equation}
We compute the coefficients $a_i, b_i$ in equations~\eqref{eqn:g_1}-\eqref{eqn:g_2} using~\eqref{coef} to obtain the values shown in Fig.~\ref{fig:turningexample}.
\end{example}

\begin{example}
Consider the function $g(x, y) = \sqrt{x^{2} + y^{2}}$, where $x \sim Uniform(-1, 1)$ and $y$ has pdf $f_{y} = 0.75(1 - y^{2})$ supported on  $\left[-1, 1\right]$.
Up to degree 2, the basis elements in the PC expansion are element-wise products of the univariate orthogonal polynomials
\begin{gather*}
    p^{0}_{x}(x) = 1, \quad  p^{0}_{y}(y) = 1,
    p^{1}_{x}(x) = \sqrt{3}x, \quad  p^{1}_{y}(y) = \sqrt{5}y,\\
    p^{2}_{x}(x) = 1.5\sqrt{5}x^{2} - 0.5\sqrt{5}, \,   p^{2}_{y}(y) = 1.25\sqrt{14}y^{2} - 0.25\sqrt{14}.
\end{gather*}
The corresponding  PCE polynomial basis elements are  
\begin{gather*}
    p^{00}_{xy}(x, y)  = 1, \quad 
    p^{02}_{xy}(x, y) = 1.25\sqrt{14}y^{2} - 0.25\sqrt{14}, \quad 
    p^{20}_{xy}(x, y) = 1.5\sqrt{5}x^{2} - 0.5\sqrt{5},\\
    p^{22}_{xy}(x, y) = 1.875\sqrt{70}x^{2}y^{2} - 0.375\sqrt{70}x^{2} - 0.625\sqrt{70}y^{2} + 0.125\sqrt{70},
\end{gather*}
with corresponding non-zero Fourier coefficients 
$c_{00} = 0.677408, c_{02} = 0.154109, c_{20} = 0.216390$, and $c_{22} = -0.040153.$ 
The resulting estimator is 
\begin{align}\notag
    \hat{g}(x, y) = \sum\limits_{(i, j) = (0, 0)}^{(2, 2)}c_{i, j}p^{ij}_{xy}(x, y) = -0.629900x^{2}y^{2} + 0.851774x^{2} + 0.930747y^{2} + 0.249327.
\end{align}
\end{example}


\paragraph{Complexity.} 

Assuming the expansion is carried out up to the same polynomial degree $d$ for each basic variable, $\bar{d}_{i} = d$, $\forall i=1,...,k$. This implies $d=\sqrt[k]{L}-1$. The complexity of the scheme is $\mathcal{O}(sd^{2}k + s^{k}d^{k})$, where $\mathcal{O}(s)$ is the complexity of computing univariate integrals.

The complexity of our approximation scheme consists of of two parts:
(1) the orthogonalization process and (2) the calculation of coefficients. Regarding (1), we orthogonalize and normalize $k$ sets of $d$  basic linearly independent polynomials via the Gram-Schmidt process.
For degree $d{=}1$, we need to calculate one integral, the inner product with the previous polynomial.
Additionally, we need to compute one more integral, the norm of itself (for normalization).
For each subsequent degree $d'$, we must calculate $d'$ additional new integrals. The computation of each integral has complexity $\mathcal{O}(s)$.
Regarding (2), the computation of the coefficients requires  calculating $L{=}(d{+} 1)^{k}$ integrals with $k$-variate functions as integrands.

We define the approximation error  to be
\begin{equation}\label{error_std}
    se(\hat{g}) = \sqrt{\int\limits_{\mathcal{Q}}\left(g(z_{1},...,z_{k}) - \hat{g}(z_{1},...,z_{k})\right)^{2}dF_{Z_1}\ldots dF_{Z_k}}
\end{equation}
since $\E(\hat{g}(Z_{1},...,Z_{k}))=g(Z_1,\cdots,Z_k)$ by construction.

The implementation of this algorithm may become challenging when the random functions  have complicated forms and  the number of parametric uncertainties is large. In this case, the calculation of the PCE coefficients involves high dimensional integration, which may prove difficult and time prohibitive for real-time applications \cite{SonDu2020}.

\section{Prob-Solvable Loops for General Non-Polynomial Functions}\label{sec:genfun}

PCE\footnote{We provide further details about PCE computation in Appendix~2 in \cite{KofnovMSBB22}.}
allows incorporating non-polynomial updates into Prob-solvable loop programs and use the algorithm in \cite{Bartoccietal2019} and exact tools, such as \textsc{Polar}~\cite{Moosbruggeretal2022}, for moment (invariant) computation.
We identify two classes of programs based on how the distributions of the generated random variables vary.

\subsection{Iteration-Stable Distributions of Random Arguments}\label{sec:stable} 

Let $\Pcal$ be an arbitrary Prob-solvable loop and suppose that a (non-basic) state variable $x \in \Pcal$ has a non-polynomial $L^2$-type update $g(\Z)$, where $\Z = (Z_{1}, \ldots, Z_{k})^{T}$ is a vector of (basic) continuous, independent, and identically distributed random variables \textit{across iterations}. That is, if $f_{Z_j(n)}$ is the pdf  of the random variable $Z_j$ in iteration $n$, then  $f_{Z_j( n)} \equiv f_{Z_j(n')}$, for all iterations $n, n'$ and $j=1,\ldots,k$. The basic random variables $Z_1,\ldots, Z_k$ and the update function $g$ satisfy conditions (A)--(E) in Section~\ref{sec:conditions}.
For the class of Prob-solvable loops where all variables with non-polynomial updates satisfy these conditions,  the computation of the Fourier coefficients in the PCE approximation \eqref{polyexpanse} can be carried out as explained in Section~\ref{sec:PCEalgo}.
In this case, the convergence rate is optimal.  

\subsection{Iteration Non-Stable Distribution of Random Arguments}\label{sec:non-stable} 

Let $\Pcal$ be an arbitrary Prob-solvable loop and suppose that a state variable $x \in \Pcal$ has a non-polynomial $L^2$-type update $g(\Z)$, where $\Z = (Z_{1}, \ldots, Z_{k})^{T}$ is a vector of continuous independent  but \textit{not necessarily identically} distributed random variables across iterations. 
For this class of Prob-solvable loops, conditions (A)--(C) in Section~\ref{sec:conditions} hold, but (D) and/or (E) may not be fulfilled.
In this case, we can ensure optimal exponential convergence by fixing the number of loop iterations.
For unbounded loops, we describe an approach converging in mean-square and establish its convergence rate next.

\paragraph{Conditional estimator given number of iterations.}

Let $N$ be an a priori fixed finite integer, representing the maximum iteration number. The set $\{1, \ldots, N\}$ is a finite sequence of iterations for the Prob-solvable loop $\Pcal$.

Iterations are executed sequentially for $n =1,\ldots, N$, which allows  the estimation of  the final functional that determines the target state variable at each iteration $n \in \{1, \ldots, N\}$ and its set of supports. Knowing these features, we can carry out $N$ successive expansions. Let $P(n)$ be a~PCE of~$g(\Z)$ for iteration $n$. We introduce an additional program variable $c$ that counts the loop iterations.
The variable $c$ is initialized to $0$ and incremented by $1$ at the beginning of every loop iteration.
The final estimator of  $g(\Z)$ can be represented as 
\begin{equation}\label{eq:cond-estimator}
    \hat{g}(\Z) = \sum\limits_{n = 1}^{N}P(n)\left[\prod\limits_{j = 1, j \neq n}^{N}\frac{(c - j)}{n - j}\right].
\end{equation}
Replacing non-polynomial functions with \eqref{eq:cond-estimator} results in a program with only polynomial-type updates and \textit{constant} polynomial structure; that is, polynomials with  coefficients that remain constant across iterations. Moreover, the estimator is unbiased with optimal exponential convergence on the set of iterations $\{1, \ldots, N\}$ \cite{XiuKarniadakis2002a}.

\paragraph{Unconditional estimator.}\label{sec:uncond} 
Here the iteration number is unbounded. Without loss of generality, we consider a single basic random variable $Z$; that is, $k{=}1$.  The function  $g(Z)$ is scalar-valued and can be represented as a polynomial of \textit{nested} $L^2$ functions, which depend on polynomials of the argument variable. Each nested functional argument is expressed as a sum of orthogonal polynomials yielding  the final estimator, which is itself a polynomial.

Since PCE converges to the function it approximates in mean-square (see \cite{Ernstetal2012}) 
on the whole interval (argument's support),  PCE converges on any sub-interval of the support of the argument in the same sense.

Let us consider a function $g$ with a sufficiently large domain and a random variable $Z$ with known distribution and support. For example, $g(Z) = e^{Z}$, with $Z \sim N(\mu, \sigma^{2})$. The domain of $g$ and the support of $Z$ are the real line.
We can expand $g$ into a PCE with respect to the distribution of  $Z$ as 
\begin{align}\label{truef}
 g(Z) &= \sum\limits_{i=0}^{\infty}c_{i}p_{i}(Z).
\end{align}
The distribution of $Z$ is reflected in the polynomials in \eqref{truef}. Specifically,  $p_{i}$, for $i=0,1,\ldots$, are  Hermite polynomials of special type in that they are orthogonal (orthonormal) with respect to $N(\mu, \sigma^{2})$. They also form an  orthogonal basis of the space of $L^2$ functions. Consequently, any function in $L^2$ can be estimated arbitrarily closely by these polynomials.
In general, any continuous distribution with finite moments of all orders and sufficiently large support can also be used as a model for basic variables in order to construct a basis for $L^2$ (see \cite{Ernstetal2012}).

Now suppose that the distribution of the  underlying variable $Z$ is unknown with pdf $f_Z(z)$ that is continuous on its support $\left[a, b \right]$. Then, there exists another basis of polynomials, $\{q_{i}\}_{i=0}^{\infty}$, which are orthogonal on the support $\left[a, b \right]$ with respect to the pdf $f_Z$. Then, on the interval $\left[a, b \right]$,  $g(Z) = \sum_{i=0}^{\infty}k_{i}q_{i}(Z)$, and $\mathbb{E}_{f}\left[g(Z)\right] = \mathbb{E}_{f}\left[\sum_{i=0}^{M}k_{i}q_{i}(Z) \right]$, $\forall M \geq 0$.

Since  $\left[a, b \right] \subset \real$,  the expansion $\sum_{i=0}^{\infty}c_{i}p_{i}(Z)$ converges  in mean-square to $g(Z)$ on $\left[a, b \right]$. In the limit, we have  $g(Z) = \sum_{i=0}^{\infty}c_{i}p_{i}(Z)$ on the interval $\left[a, b \right]$. Also, $\mathbb{E}_{f}\left(g(Z)\right) = \mathbb{E}_{f}\left(\sum_{i=0}^{\infty}c_{i}p_{i}(Z)\right)$ for the true pdf $f$ on $\left[a, b \right]$.
In general, though, it is not true that $\mathbb{E}_{f}\left(g(Z)\right) = \mathbb{E}_{f}\left(\sum_{i=0}^{M}c_{i}p_{i}(Z)\right)$ for any arbitrary $M \geq 0$ and any pdf $f_Z(z)$ on $\left[a, b \right]$, as the estimator is biased.

To capture this discrepancy, we define the approximation error as
\begin{align}\label{approxer}
  e(M) &= \mathbb{E}_{f}\left( g(Z) -  \sum\limits_{i=0}^{M}c_{i}p_{i}(Z) \right)^{2} = \mathbb{E}_{f}\left(\sum\limits_{i=M+1}^{\infty}c_{i}p_{i}(Z) \right)^{2}.
\end{align}
\paragraph{Computation of error bound.}

Assume the true pdf $f_Z$ of $Z$ is supported on $\left[ a, b\right]$. Also, assume the domain of $g$ is $\mathbb{R}$. The random function $g(Z)$ has PCE  on the whole real line based on Hermite polynomials $\{p_{i}(Z)\}_{i=0}^{\infty}$ that are orthogonal with respect to the standard normal pdf $\phi$. The truncated  expansion estimate of \eqref{truef} with respect to a normal basic random variable is
\begin{align}
    \hat{g}(Z)&=  \sum\limits_{i=0}^{M}c_{i}p_{i}(Z). \label{fapprox}
\end{align} 
We compute an upper bound for the approximation error for our scheme in Theorem~\eqref{err-bound}.

\begin{theorem}\label{err-bound}
Suppose $Z$ has density $f_Z$ supported on $[a,b]$, $g: \real \to \real$ is in $L^2$, and $\phi$ denotes the standard normal pdf. Under \eqref{truef} and \eqref{fapprox},
\begin{align}
   \left \| g(Z) - \hat{g}(Z) \right \|_{f}^{2}&= \int_a^b \left(g(z)-\hat{g}(z)\right)^2 f_Z(z) dz \notag \\
   & \le   \left( \frac{2}{\min{(\phi(a), \phi(b))}} + 1 \right) \var_{\phi}\left(g(Z)\right). \label{bound}
\end{align}
\end{theorem}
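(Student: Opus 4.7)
}

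The plan is to bound $\|g(Z)-\hat g(Z)\|_f^2$ by combining Parseval's identity in the Hermite basis with a comparison of $f_Z$ against $\phi$ on $[a,b]$.

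First I would write the truncation remainder as the tail of the Hermite expansion, $g(Z) - \hat g(Z) = \sum_{i>M} c_i p_i(Z)$, with $\{p_i\}$ orthonormal with respect to $\phi$. Parseval then gives $\|g-\hat g\|_\phi^2 = \sum_{i>M} c_i^2$, and since $p_0\equiv 1$ forces $c_0=\E_\phi g$, we get $\sum_{i\ge 1} c_i^2 = \|g\|_\phi^2 - c_0^2 = \var_\phi(g(Z))$, so that $\|g-\hat g\|_\phi^2 \le \var_\phi(g(Z))$. This controls the benchmark $L^2(\phi)$ error by the quantity appearing on the right-hand side of \eqref{bound}.

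Next I would split the target integral via $f_Z = \phi + (f_Z - \phi)$ on $[a,b]$:
\begin{equation*}
    \int_a^b (g-\hat g)^2 f_Z \, dz \;=\; \int_a^b (g-\hat g)^2 \phi \, dz \;+\; \int_a^b (g-\hat g)^2 (f_Z - \phi) \, dz.
\end{equation*}
The first term is at most $\|g-\hat g\|_\phi^2 \le \var_\phi(g(Z))$, which accounts for the $+1$ in the constant on the right-hand side of \eqref{bound}. For the second term, the two ingredients I would use are (a) the total-variation-type bound $\int_a^b |f_Z - \phi|\,dz \le \int_a^b f_Z\,dz + \int_a^b \phi\,dz \le 2$, since both densities integrate to at most $1$; and (b) the unimodality and symmetry of $\phi$, which guarantee $\phi(z) \ge \min(\phi(a),\phi(b))$ uniformly on $[a,b]$. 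The factor $2$ in the constant $2/\min(\phi(a),\phi(b))$ should come from (a), and the reciprocal of the minimum of $\phi$ from (b).

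The main obstacle is executing the second step rigorously: one needs to pass from a pointwise lower bound on $\phi$ over $[a,b]$ to a uniform upper bound on $(g-\hat g)^2$ over $[a,b]$ in terms of $\|g-\hat g\|_\phi^2$. A naive attempt such as $(g-\hat g)^2(z) \le \|g-\hat g\|_\phi^2/\phi(z)$ fails for generic $L^2(\phi)$ functions, so the argument must either (i) exploit the special form of the Hermite tail together with the Cramér bound $|p_n(z)|\, e^{-z^2/4} \le C$ to obtain a pointwise estimate that pairs with $\int|f_Z-\phi|\,dz \le 2$, or (ii) reweight $(f_Z-\phi)\,dz$ as $\tfrac{f_Z-\phi}{\phi}\cdot \phi\,dz$ and extract the factor $1/\min(\phi(a),\phi(b))$ before reintegrating against $\phi$ and invoking Parseval once more. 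Either route yields the target $(2/\min(\phi(a),\phi(b)))\,\var_\phi(g(Z))$ contribution, and combining with the first term produces the stated bound.
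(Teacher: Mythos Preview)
Your proposal matches the paper's argument; in particular, your route (ii) is exactly what the paper does. After the split $f_Z=\phi+(f_Z-\phi)$ on $[a,b]$, the paper bounds the cross term via $(f_Z-\phi)\le(f_Z+\phi)$ together with $\int_a^b(f_Z+\phi)\,dz<2$ (your ingredient (a)) to reduce it to $2\int_a^b(g-\hat g)^2\,dz$, then inserts $1\le \phi(z)/\min(\phi(a),\phi(b))$ on $[a,b]$ (your ingredient (b)) to obtain $\tfrac{2}{\min(\phi(a),\phi(b))}\int_a^b(g-\hat g)^2\phi\,dz$, and finishes with Parseval and $\sum_{i>M}c_i^2\le\var_\phi(g(Z))$ exactly as you outline.
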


\medskip
\noindent 
The upper bound in \eqref{bound} depends only on the support of $f_Z$ and the function $g$. 
If $Z$ is standard normal ($f_Z=\phi$), then the upper bound in \eqref{bound} equals $\var_\phi(g(Z))$. We provide the proof of Theorem~\ref{err-bound} in Appendix~\ref{app:A}.

\begin{remark}
The approximation error inequality in \cite[Lemma 1]{Muehlpfordtetal2018}, 
\begin{equation}\label{muelpford}
\left \|{ g(Z) - \sum _{i=0}^{T} c_{i}  p_{i}(Z) }\right \| \leq \frac {\| g(Z)^{(k)} \|}{ \prod _{i=0}^{k-1} \sqrt {T-i+1}}, 
\end{equation}
is a special case of  Theorem \ref{err-bound} when $Z \sim \mathcal{N}(0,1)$ and $f_Z=\phi$, and the  polynomials $p_i$ are Hermite. In this case, the left hand side of \eqref{muelpford} equals  
$\sqrt{\sum_{i=n+1}^{\infty}c_{i}^{2}}$.
\end{remark}

Although Theorem~\ref{err-bound} is restricted to distributions with bounded support, the approximation in  \eqref{fapprox}  also converges for distributions with unbounded support.

\section{Exact Moment Derivation}\label{sec:exact}

In Sections \ref{sec:PCE_algo} and \ref{sec:genfun}, we combined PCE with the \textit{Prob-solvable loop} algorithm of \cite{Moosbruggeretal2022,Bartoccietal2019} to compute PCE \emph{approximations} of the moments of the distributions of the random variables generated in a probabilistic loop. 
In this section, we develop a method for the derivation of the \emph{exact} moments of probabilistic loops that comply with a specified loop structure and functional assignments.
\cite{Bartoccietal2019}, and later \cite{Moosbruggeretal2022}, introduce a technique for exact moment computation for \textit{Prob-solvable loops} without non-polynomial functions.
\textit{Prob-solvable loops} support common probability distributions $F$ with constant parameters and program variables $x$ with specific polynomial assignments (cf. Section~\ref{sec:preliminaries}).
In Section~\ref{subsec:trigexp}, we first show how to compute $\E(h(F)^k)$ where $h$ is the exponential or a trigonometric function.
In Section~\ref{subsec:exactvars}, we describe how to incorporate trigonometric and exponential updates of program variables $x$ into the \textit{Prob-solvable loop} setting.

\subsection{Trigonometric and Exponential Functions for Distributions}\label{subsec:trigexp}

The first step in supporting trigonometric and exponential functions in Prob-solvable loops is to understand how to compute the expected values of random variables that are trigonometric and exponential functions of random variables with known distributions.
Due to the polynomial arithmetic supported in Prob-solvable loops, non-polynomial functions of random variables can be mixed via multiplication in the resulting program.
We adopt the results from \cite{Jasouretal2021} providing a formula for the expected value of mixed trigonometric polynomials of distributions, given the distributions' characteristic functions.

\begin{definition}
We call  $p(x)$ a \emph{standard polynomial of order $d \in \nat$} if   
\begin{align*}
       p(x) = \sum_{k = 0}^{d}\alpha_{k}x^{k},
\end{align*}
with coefficients $\alpha_{k} \in \real$.
Further, $p(x)$ is defined to be a \emph{mixed trigonometric polynomial} if it is a mixture of a standard polynomial with trigonometric functions of the form 
\begin{align*}
    p(x) = \sum_{k=0}^{d}\alpha_{k}x^{b_{k}}\cdot cos^{c_{k}}(x)\cdot sin^{s_{k}}(x),
\end{align*}
with $b_{k}, c_{k}, s_{k} \in \nat$ and coefficients $\alpha_{k} \in \real$ \cite{Jasouretal2021}. 
\end{definition}

Following \cite{Jasouretal2021}, we define the mixed-trigonometric-polynomial moment of order $\alpha$ for a random variable $X$ as 
\begin{align}\label{mix-trig-mom}
m_{X^{\alpha_1} c_{X}^{\alpha_{2}} s_{X}^{\alpha_3}} &= \E \left[X^{\alpha_{1}} \cos^{\alpha_{2}}(X) \sin^{\alpha_{3}}(X)\right],
\end{align} 
where $(\alpha_{1}, \alpha_{2}, \alpha_{3}) \in \nat^{3}$ such that $ \alpha=\sum_{k=1}^{3}\alpha_{k} $.
When the characteristic function of the random variable $X$ is known, Lemma 4 in \cite{Jasouretal2021}, together with the linearity of the expectation operator, provides the computation rule for \eqref{mix-trig-mom}:
 \begin{align}
    m_{X^{\alpha_{1}} c_{X}^{\alpha_{2}} s_{X}^{\alpha_{3}}} = \frac{1}{i^{\alpha_{1} + \alpha_{3}}2^{\alpha_{2} + \alpha_{3}}} \times \sum\limits_{(k_{1}, k_{2}) = (0, 0)}^{(\alpha_{2}, \alpha_{3})}\binom{\alpha_{2}}{k_{1}}\binom{\alpha_{3}}{k_{2}}(-1)^{\alpha_{3} - k_{2}}\frac{d^{\alpha_{1}}}{dt^{\alpha_{1}}}\Phi_{X}(t)\bigg|_{t = 2(k_{1} + k_{2}) - \alpha_{2}-\alpha_{3}}.
\end{align}

\begin{example}
Let $\X \sim \mathcal{N}(0,1)$. Its characteristic function is $\Phi_X(t) = \exp(-0.5t^2)$ and $\Phi'_X(t)  = -t\exp(-0.5t^2)$. Then,
\begin{align*}
    \E\left[X \sin(X) \cos(X)\right] &= \frac{1}{i^2 2^2} \left( -\Phi'_X(-2) + \Phi'_X(0) - \Phi'_X(0) + \Phi'_X(2)  \right) = \exp(-2).
\end{align*}
\end{example}

To support exponential functions in Prob-solvable loops, we define  \emph{mixed exponential polynomials} as 
\begin{align*}
    p(x) = \sum_{k = 0}^{d}\alpha_{k}x^{b_k}\exp^{e_k}(x),
\end{align*}
with $b_k, e_k \in \nat$ and coefficients $\alpha_k \in \real$.
Lemma \ref{lem:mom-mix} obtains a computational rule for moments of mixed exponential polynomials, provided they exist, in terms of the moment-generating function of the random variable $X$.

\begin{lemma}\label{lem:mom-mix}
Let $X$ be a random variable with moment-generating function $\mathcal{M}_{X}(t) = \E\left[\exp(t X)\right]$. Suppose $\alpha_1, \alpha_2 \in \nat$, and let $\alpha = \alpha_1 + \alpha_2$.
If the  mixed-exponential-polynomial moment of order $\alpha$ exists, it  can be computed using the following formula
\begin{align}
    m_{X^{\alpha_1} e^{\alpha_2}_X} &=\E\left[X^{\alpha_1}\exp^{\alpha_2}(X)\right]= \frac{d^{\alpha_1}}{dt^{\alpha_1}} \mathcal{M}_{X}(t) \bigg|_{t=\alpha_2}.
\end{align}
\end{lemma}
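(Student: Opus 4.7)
The plan is to prove the formula by differentiating the moment-generating function under the integral/expectation sign and then evaluating at $t=\alpha_2$. The key identity is that for any sufficiently regular $t$,
\begin{equation*}
\frac{d^{\alpha_1}}{dt^{\alpha_1}} \mathcal{M}_X(t) = \frac{d^{\alpha_1}}{dt^{\alpha_1}} \E\!\left[\exp(tX)\right] = \E\!\left[X^{\alpha_1}\exp(tX)\right],
\end{equation*}
because $\tfrac{d^{\alpha_1}}{dt^{\alpha_1}}\exp(tX) = X^{\alpha_1}\exp(tX)$. Once this identity is established, setting $t=\alpha_2$ immediately yields $\E[X^{\alpha_1}\exp(\alpha_2 X)] = \E[X^{\alpha_1} \exp^{\alpha_2}(X)] = m_{X^{\alpha_1} e_X^{\alpha_2}}$, which is the claim.

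First, I would recall that whenever $\mathcal{M}_X$ is finite on an open neighborhood of a point $t_0$, it is real-analytic there and one may differentiate under the expectation to all orders at $t_0$ (this is the standard MGF-smoothness fact, provable e.g.\ by dominated convergence with a dominating function of the form $|X|^{\alpha_1}\exp((t_0+\varepsilon)X) + |X|^{\alpha_1}\exp((t_0-\varepsilon)X)$ for sufficiently small $\varepsilon > 0$). Second, I would use the hypothesis that the mixed-exponential-polynomial moment of order $\alpha=\alpha_1+\alpha_2$ exists, i.e.\ $\E[|X|^{\alpha_1}\exp(\alpha_2 X)] < \infty$, to guarantee that $\mathcal{M}_X$ is in fact finite in a neighborhood of $t=\alpha_2$; this is the point where the existence assumption in the lemma is consumed.

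The main (really the only) obstacle is this interchange of differentiation and expectation at $t=\alpha_2$, since $\alpha_2$ need not lie in the interior of the usual domain of convergence of $\mathcal{M}_X$ without an extra assumption. The cleanest way around this is to observe that the existence of $m_{X^{\alpha_1}e_X^{\alpha_2}}$ together with Hölder's (or Jensen's) inequality yields finiteness of $\E[|X|^{j}\exp(sX)]$ for all $j \le \alpha_1$ and all $s$ in a small open interval around $\alpha_2$, which provides the integrable dominating function needed to push $\alpha_1$ derivatives inside the expectation by induction on $\alpha_1$. With that justified, the chain of equalities
\begin{equation*}
\frac{d^{\alpha_1}}{dt^{\alpha_1}} \mathcal{M}_X(t)\bigg|_{t=\alpha_2}
= \E\!\left[X^{\alpha_1}\exp(tX)\right]\bigg|_{t=\alpha_2}
= \E\!\left[X^{\alpha_1}\exp^{\alpha_2}(X)\right]
\end{equation*}
closes the proof. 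This mirrors exactly the logic used for the characteristic-function version in \cite{Jasouretal2021} that underlies the trigonometric computation rule, only with $\Phi_X$ replaced by $\mathcal{M}_X$ and the factors of $i$ absent.
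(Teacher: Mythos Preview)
Your proposal is correct and follows essentially the same approach as the paper: recognize that $X^{\alpha_1}\exp(\alpha_2 X)=\frac{d^{\alpha_1}}{dt^{\alpha_1}}\exp(tX)\big|_{t=\alpha_2}$ and swap differentiation with expectation. The paper's proof is in fact a single chain of equalities with no discussion of the interchange, so your added remarks on dominated convergence are more careful than what the paper provides (though your claim that finiteness of $\E[|X|^{\alpha_1}\exp(\alpha_2 X)]$ alone forces finiteness of $\mathcal{M}_X$ on an open neighborhood of $\alpha_2$ is not quite right when $\alpha_2$ sits on the boundary of the MGF's domain; the paper simply does not address this edge case).
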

\begin{proof}
\begin{align*}
    m_{X^{\alpha_1} e^{\alpha_2}_X} = 
    \E \left[X^{\alpha_1}\exp^{\alpha_2}(X) \right] =
    \E \left[\frac{d^{\alpha_1}}{dt^{\alpha_1}}\exp(t X) \bigg|_{t=\alpha_2} \right] = 
    \frac{d^{\alpha_1}}{dt^{\alpha_1}} \E \left[\exp(t X) \right]  \bigg|_{t=\alpha_2} = 
    \frac{d^{\alpha_1}}{dt^{\alpha_1}} \mathcal{M}_{X}(t) \bigg|_{t=\alpha_2}.
\end{align*}
\end{proof}

\begin{example}
Suppose $X \sim \mathcal{N}(0,1)$ with moment generating function $\mathcal{M}_X(t) = \exp(0.5t^2)$.
Since  $X \exp^2(X)$ is integrable with respect to the normal pdf, 
\begin{align*}
\E\left[ X \exp^2(X)\right] = \frac{d}{dt} \exp(0.5t^2) \bigg|_{t=2} = t \cdot \exp(0.5t^2) \bigg|_{t=2} = 2 \cdot \exp(2).
\end{align*}
\end{example}

\subsection{Trigonometric and Exponential Functions in Variable Updates}\label{subsec:exactvars}

We now examine the presence of trigonometric and exponential functions of program variables, specifically of \emph{accumulator} variables, in Prob-solvable loops.

\begin{definition}[Accumulator]
    We call a program variable $x$ an \emph{accumulator} if the update of $x$ in the loop body has the form $x = x + z$, such that $z_i$ and $z_{i+1}$ are independent and identically distributed for all $i \geq 1$.  
\end{definition}

Consider a loop~$\Pcal$ with an accumulator variable $x$, updated as $x = x + z$, and a trigonometric or exponential function $h(x)$. 
Further, assume that the characteristic function (if $h=\sin$ or $h=\cos$) or the moment-generating function (if $h=\exp$) of $z$ is known.
Note that the distribution of the variable $x$ is, in general, different in every iteration.
Listing~\ref{lst:protoex} gives an example of such a loop.

\begin{lstlisting}[mathescape,label={lst:protoex}, caption={PP loop prototype}]
$\displaystyle z = 0;\; x = 0;\; y = 0$
while true:
     $\displaystyle z = Normal(0,1)$
     $\displaystyle x = x + z$
     $\displaystyle y = y + h(x)$ 
end
\end{lstlisting}

The idea now is to transform $\Pcal$ into an equivalent Prob-solvable loop $\Pcal'$ such that the term~$h(x)$ does not appear in $\Pcal'$.
In the following, we assume, for simplicity, that first $z$ is updated in $\Pcal$, then $x$, and only then $h$ is used. 
The following arguments are analogous if the updates are ordered differently (only the indices change).
Note that we can rewrite $h(x_{t+1})$ as $h(x_{t} + z_{t+1})$. 

\paragraph{Transforming $\exp(x)$.}
In the case of $h=\exp$, we have 
\begin{equation}
\exp(x_{t+1})
= \exp(x_{t} + z_{t+1})
= \exp(x_{t}) \exp(z_{t+1}).
\end{equation}
We utilize this property and transform the program $\Pcal$ into a program $\Pcal'$ by introducing an auxiliary variable $\hat x$ that models the value of $\exp(x)$.
The update of $\hat x$ in the loop body succeeds the update of $x$ and is
\begin{equation}
\hat x = \hat x \cdot \exp(z).\label{eq:upd-exp}
\end{equation}
The auxiliary variable is initialized as $\hat x_{0} = \exp(x_0)$.
We then replace $h(x)$ by $\hat x$ in $\Pcal$ to arrive at our transformed program $\Pcal'$.
Because $z$ is identically distributed in every iteration and its moment-generating function is known, we can use the results from Section~\ref{subsec:trigexp} to compute any moment of $\exp(z)$.
Thus, the update in \eqref{eq:upd-exp} is supported by Prob-solvable loops.

\paragraph{Transforming sine and cosine.}
In the case of $h=\sin$ or $h=\cos$, 
applying standard trigonometric identities obtains
\begin{align}
\begin{split}
\cos(x_{t+1})
&= \cos(x_{t} + z_{t+1})
= \cos(x_{t})\cos(z_{t+1}) - \sin(x_{t})\sin(z_{t+1}),\\
\sin(x_{t+1})
&= \sin(x_{t} + z_{t+1})
= \sin(x_{t})\cos(z_{t+1}) + \cos(x_{t})\sin(z_{t+1}).
\end{split}
\end{align}
We introduce two auxiliary variables, $s$ and $c$, modeling the values of $\sin(x)$ and $\cos(x)$, simultaneously updated\footnote{Simultaneous updates $c, s = \text{expr}_1, \text{expr}_2$ can always be expressed as sequentially: $t_1 = \text{expr}_1; t_2 = \text{expr}_2; c = t_1; s = t_2$.} in the loop body as
\begin{equation}
    c,\; s = c \cdot \cos(z) - s \cdot \sin(z),\; s \cdot \cos(z) + c\cdot \sin(z),\label{eq:upd-trig}
\end{equation}
with the initial values $s_0 = \sin(x_0)$ and $c_0 = \cos(x_0)$. 
We then replace $h(x)$ with $s$ or $c$ in $\Pcal'$.
Again, because $z$ is identically distributed in every iteration and its characteristic function is known, we can use the results from Section~\ref{subsec:trigexp} to compute any moment of $\sin(z)$ and $\cos(z)$.
Thus, the update in \eqref{eq:upd-trig} is supported in Prob-solvable loops.

\begin{example}
Listings~\ref{lst:protoex2-trig} and~\ref{lst:protoex2-exp} show  the program from Listing~\ref{lst:protoex} with $h=\exp$ and $h=\cos$, respectively, rewritten as equivalent Prob-solvable loops.
The program in Listing~\ref{lst:protoex2-trig} has linear circular variable dependencies due to the variables $c$ and $s$.

\begin{minipage}{0.54\textwidth}
\begin{lstlisting}[mathescape,label={lst:protoex2-trig}, caption={PP prototype equivalent -- cos}]
$\displaystyle z = 0;\; x = 0;\; c = cos(z);\; s = sin(z);\; y = 0$
while true:
  $\displaystyle z = Normal(0,1)$
  $\displaystyle x = x + z$
  $\displaystyle c, \hspace{0.2cm} s = c \cdot \cos(z) - s \cdot \sin(z), \hspace{0.2cm} s \cdot \cos(z) + c \cdot \sin(z)$
  $\displaystyle y = y + c$ 
end
\end{lstlisting}
\end{minipage}
\begin{minipage}{0.44\textwidth}
\begin{lstlisting}[mathescape,label={lst:protoex2-exp}, caption={PP prototype equivalent -- exp}]
$\displaystyle z = 0;\; x = 0;\; \hat x_i = exp(z);\; y = 0$
while true:
  $\displaystyle z = Normal(0,1)$
  $\displaystyle x = x + z$
  $\displaystyle \hat x_i = \hat x_i \cdot \exp(z)$
  $\displaystyle y = y + \hat x_i$ 
end
\end{lstlisting}
\end{minipage}

\end{example}

\section{Evaluation}\label{sec:evaluation}

We evaluate our PCE-based method for moment approximation and our exact moment derivation approach on eleven benchmarks.
The set of benchmarks consists of those in \cite{KofnovMSBB22},  five additional benchmarks from \cite{Jasouretal2021}, and a probabilistic loop modeling stochastic exponential decay.
The benchmark \emph{Walking Robot} in \cite{Jasouretal2021} is the same as the \emph{Rimless wheel walker} in \cite{KofnovMSBB22}.

We apply our PCE-based method to approximate non-polynomial functions. This transforms all benchmark programs into Prob-solvable loops, which allows using the static analysis tool \textsc{Polar}~\cite{Moosbruggeretal2022} to compute the moments of the program variables as a function of the loop iteration $n$.

We implemented the techniques for exact moment derivation for loops containing trigonometric or exponential polynomials, presented in Section~\ref{sec:exact}, in the tool \textsc{Polar}.
We evaluate the technique for exact moment derivation using \textsc{Polar} on all benchmarks satisfying the general program structure of Listing \ref{lst:protoex} in Section~\ref{sec:exact}.
We also compare our approximate and exact methods with the technique based on \emph{polynomial forms} of \cite{Srirametal2020}. When appropriate, we applied our methods, as well as the polynomial form, on the eleven benchmark models. 
All experiments were run on a machine with 32 GB of RAM and a 2.6 GHz Intel i7 (Gen 10) processor.

\begin{figure}[!t]
\centering
  \includegraphics[scale=.30]{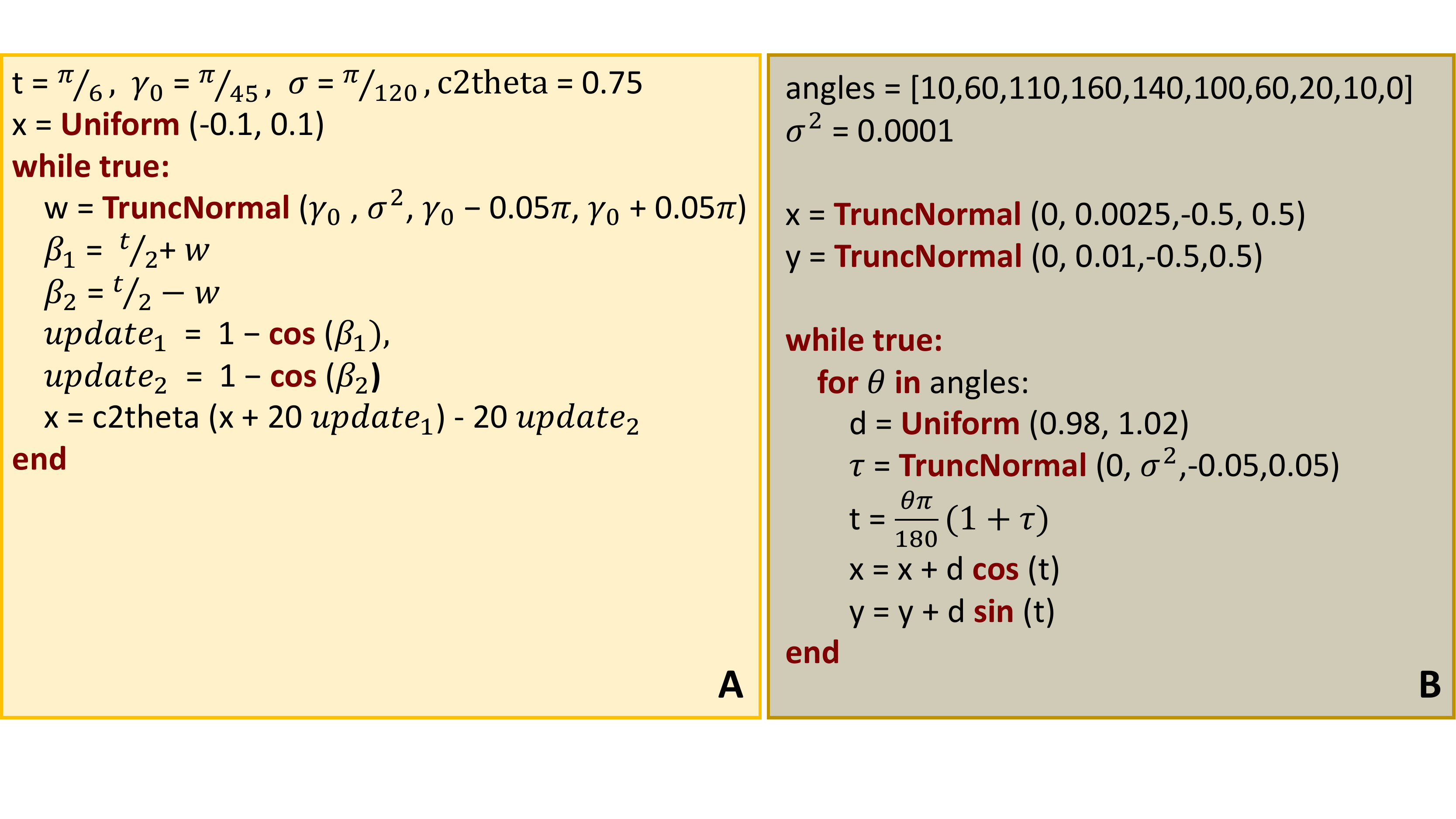}
  \caption{Probabilistic loops: (A) Rimless wheel walker~\cite{SteinhardtT12} and (B) 2D Robotic arm~\cite{Bouissouetal2016} (in the figure we use the inner loop as syntax sugar to keep the program compact). } 
  \label{fig:examplesA}
\end{figure}

\begin{figure}[!t]
\centering
  \includegraphics[scale=.450]{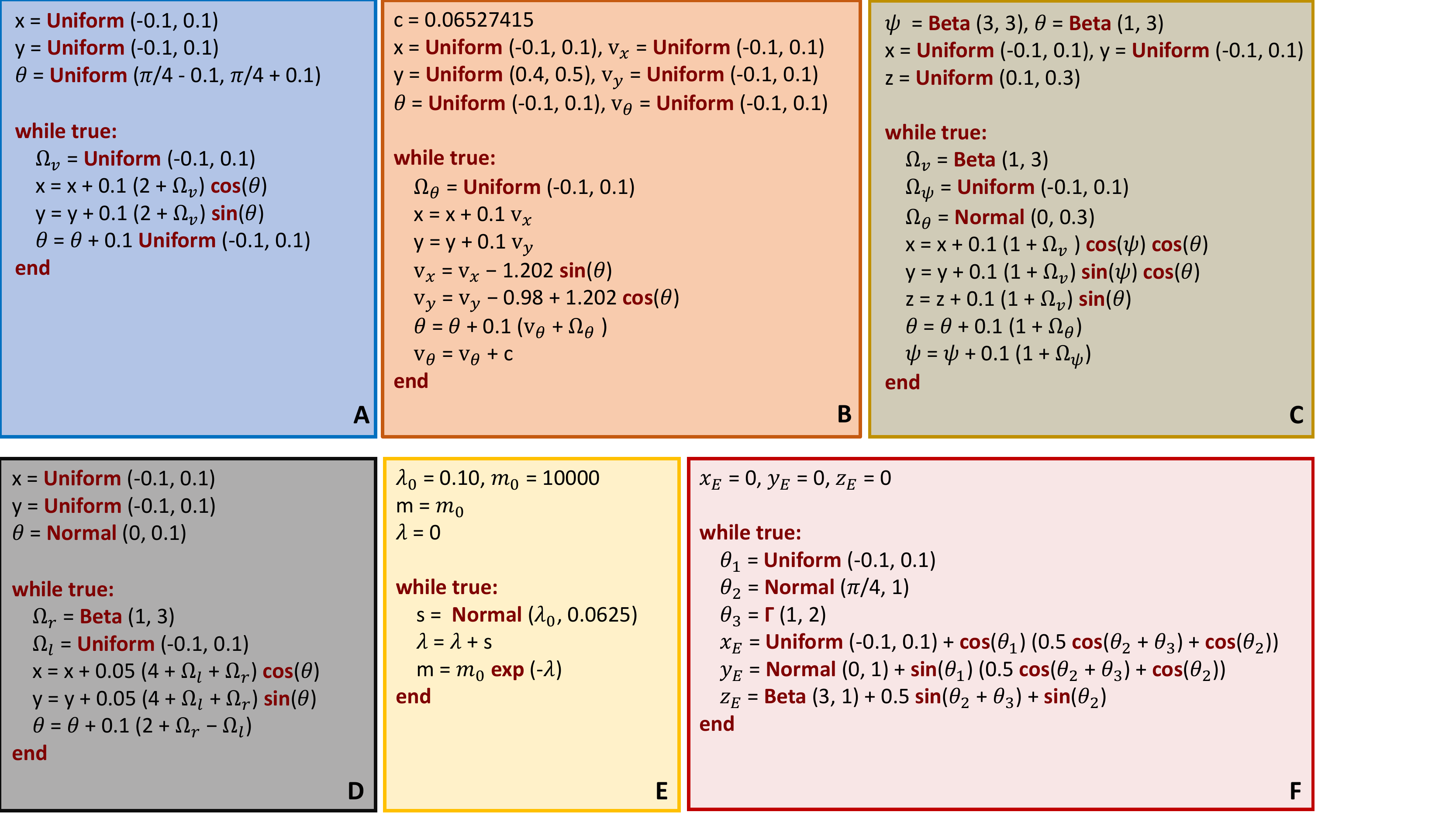}
  \caption{Probabilistic loops: (A) Uncertain underwater vehicle~\cite{Pairetetal_2020,Jasouretal2021}, (B)~Planar aerial vehicle~\cite{SteinhardtT12,Jasouretal2021}, (C)~3D aerial vehicle~\cite{Pairetetal_2020,Jasouretal2021}, (D)~Differential-drive mobile robot~\cite{vandenBergetal_2011,Jasouretal2021}, (E)~Stochastic decay, (F)~3D (Mobile) Robotic arm~\cite{Jasouretal2014,Jasouretal2021} } 
  \label{fig:examplesB}
\end{figure}

\bigskip

{\bf Taylor rule model.}
Central banks set monetary policy by raising or lowering their target for the federal funds rate.
The Taylor rule\footnote{It was proposed by the American economist John B. Taylor as a technique to stabilize economic activity by setting an interest rate \cite{Taylor1993}.} is an equation intended to describe the interest rate decisions of central banks.
The  rule relates the target of the federal funds rate to the current state of the economy through the formula
\begin{equation}
    i_{t} = r^{*}_{t} + \pi_{t} + a_{\pi}(\pi_{t} - \pi^{*}_{t}) + a_{y}(y_{t} - \bar{y}_{t}),
\end{equation}
where $i_{t}$ is the nominal interest rate, $r^{*}_{t}$ is the equilibrium real interest rate, $r^{*}_{t} = r$, $\pi_{t}$ is inflation rate at  $t$, $\pi^{*}_{t}$ is the short-term target inflation rate at $t$, $y_{t} = \log(1 + Y_{t})$, with $Y_{t}$  the real GDP, and $\bar{y}_{t} = \log(1 + \bar{Y}_{t})$, with $\bar{Y}_{t}$ denoting the potential real output. 

Highly-developed economies grow exponentially with a sufficiently small rate (e.g., according to the World Bank,\footnote{\url{https://data.worldbank.org/indicator/NY.GDP.MKTP.KD.ZG?locations=US}} the average growth rate of the GDP in the USA in 2001-2020 equaled  1.73\%).
Accordingly, we set the growth rate of the potential output to 2\%.
We model inflation as a martingale process; that is, $\E_{t}\left[\pi_{t+1}\right] = \pi_{t}$, following \cite{Atkeson_Ohanian_2001}.
The Taylor rule model is described by the program in Fig.~\ref{fig:taylor_model}.

\begin{table}[!t]
\resizebox{5.0in}{0.8in}{
        \centering
        \begin{tabular}{ccc}
            \includegraphics[scale=.8]{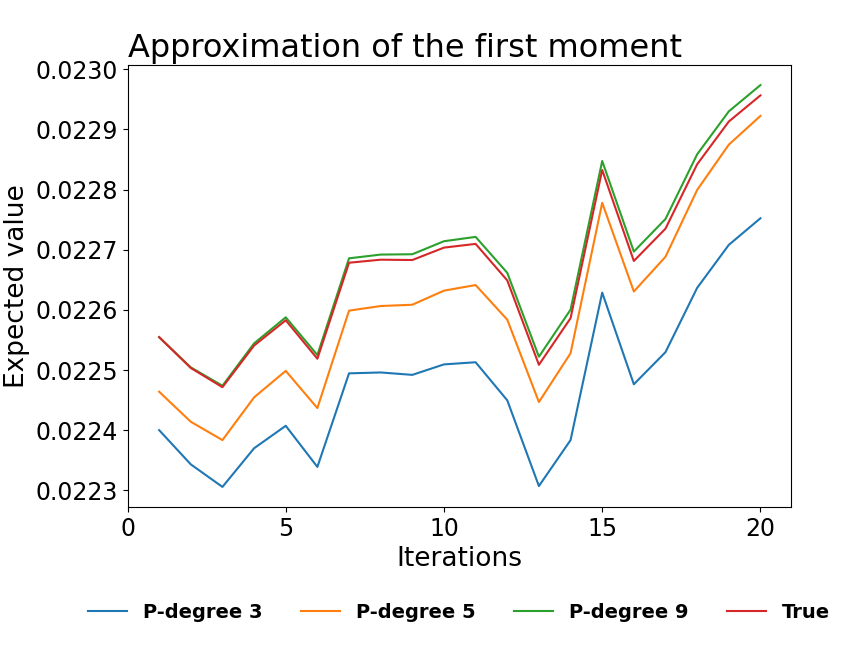} & \includegraphics[scale=.8]{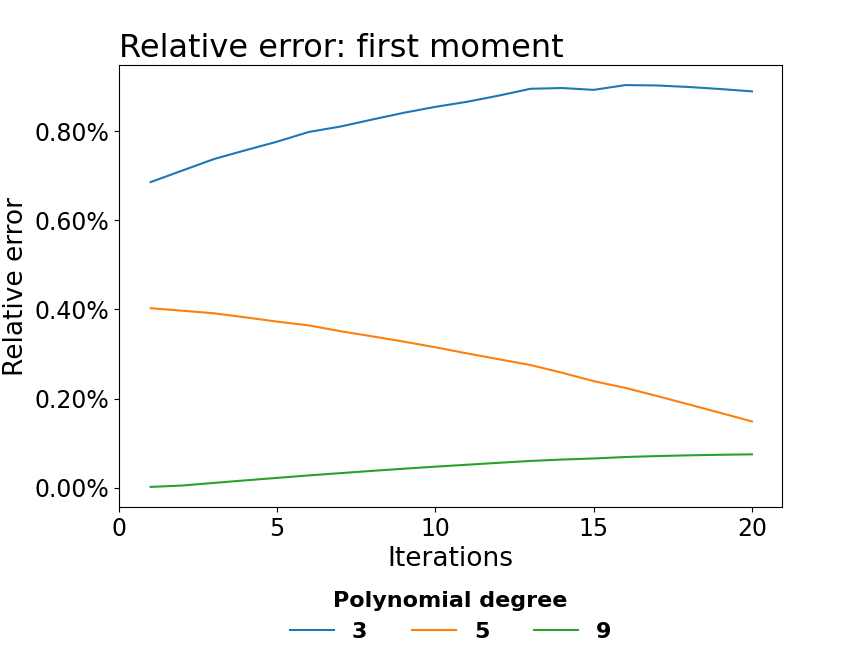} & 
            \includegraphics[scale=.8]{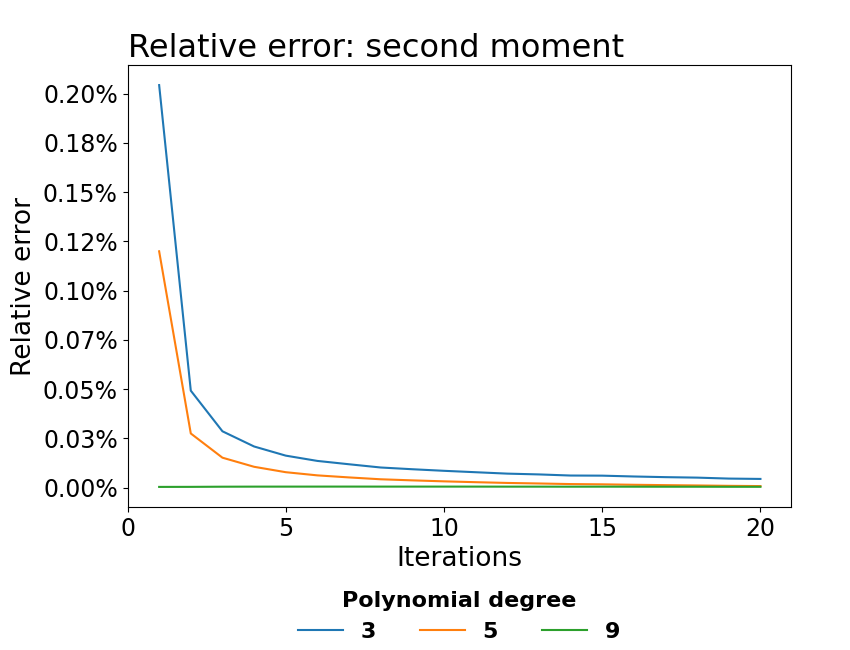} 
        \end{tabular}
}
        
        \captionof{figure}{The approximations and their relative errors for the Taylor rule model.} 
        \label{fig:aside}

    \end{table}

\medskip

{\bf Turning vehicle model.}
This model is described by the probabilistic program in Fig.~\ref{fig:turningexample}.
It was introduced in \cite{Srirametal2020} and depicts the position of a vehicle, as follows. 
The state variables are $(x, y, v, \psi)$, where $(x, y)$ is the vehicle's position with velocity $v$ and yaw angle $\psi$.
The vehicle’s velocity is stabilized around $v_{0} = 10$ m/s.
The dynamics are modelled by the equations  $x(t{+}1) = x(t) + \tau  v  \cos(\psi(t))$, $y(t{+}1) = y(t) + \tau v \sin(\psi(t))$, $v(t{+}1) = v(t) + \tau (K(v(t) - v_{0}) + w_{1}(t{+}1))$, and $\psi(t{+}1) = \psi(t) + w_{2}(t{+}1)$.
The disturbances $w_{1}$ and $w_{2}$ have distributions $w_{1} \sim U[-0.1, 0.1]$, $w_{2} \sim N(0, 0.1)$.
We set $K = -0.5$, as in \cite{Srirametal2020}.
Initially, the state variables are distributed as follows:
$x(0) \sim U[-0.1, 0.1]$, $y(0) \sim U[-0.5, -0.3]$, $v(0) \sim U[6.5, 8.0]$, $\psi(0) \sim N(0, 0.01)$.
We allow all normally distributed parameters to take values over the entire real line, in contrast to \cite{Srirametal2020} which could not accommodate distributions with infinite support and required the normal variables to be truncated.

\medskip

{\bf Rimless wheel walker.}
The \emph{Rimless wheel walker}~\cite{Srirametal2020,SteinhardtT12} is a system that describes a walking human.
The system models a rotating wheel consisting of $n_s$ spokes, each of length $L$, connected at a single point.
The angle between consecutive spokes is $\theta=2\pi / n_s$.
We set $L = 1$ and $\theta = \pi/6.$
This system  is modeled by the program in Fig.~\ref{fig:examplesA} (A).
For more details, we refer to~\cite{Srirametal2020}.

\medskip

{\bf Robotic arm model.}
Proposed and studied in \cite{Bouissouetal2016,Sankaranarayanan2020,Srirametal2020}, this system models the position of a 2D robotic arm.
The arm  moves through translations and rotations.
At every step, errors in movement are modeled with probabilistic noise.
The \textit{robotic arm model} is described by the program in Fig.~\ref{fig:examplesA}~(B).

\medskip

{\bf Uncertain underwater vehicle.}
This benchmark models the movement of an underwater vehicle subject to external disturbances~\cite{Jasouretal2021,Pairetetal_2020} and is encoded by the program in Fig.~\ref{fig:examplesB} (A).
The program variables $x$ and $y$ represent the position of the vehicle in a 2D plane and $\theta$ its orientation.
The external disturbances are modeled by probabilistic shocks to the velocity and the orientation of the vehicle.

\medskip

{\bf Planar aerial vehicle.}
This benchmark was studied in \cite{Jasouretal2021,SteinhardtT12} and models the vertical and horizontal movement of an aerial vehicle subject to wind disturbances.
It can be written as the program  in Fig.~\ref{fig:examplesB} (B), where the variables $x$ and $y$ represent the horizontal and vertical positions.
The variable $\theta$ models the rotation around the $x$ axis.
The linear velocities are captured by $v_x$ and $v_y$, and  $v_\theta$ represents the angular velocity.
The wind disturbance is modeled by the random variable~$\Omega_\theta$.
For more details, we refer to \cite{Jasouretal2021}.

\medskip

{\bf 3D aerial vehicle.}
This system, studied in \cite{Jasouretal2021,Pairetetal_2020}, models the movement of an aerial vehicle in three-dimensional space subject to wind disturbances.
The system can be written as a program as illustrated in Fig.~\ref{fig:examplesB} (C).
The program variables $x$, $y$, and $z$ represent the position of the vehicle.
The orientations around the $y$ and $z$ axis are captured by the variables $\theta$ and $\psi$, respectively.
The linear and angular velocities are constant $1$.
Wind disturbances are modeled by the random variables $\Omega_\nu$, $\Omega_\psi$, and $\Omega_\theta$.
For more details, we refer to \cite{Jasouretal2021}.

\medskip

{\bf Differential-drive mobile robot.}
This system models the movement of a differential-drive mobile robot with two wheels subject to external disturbances and was studied in \cite{Jasouretal2021,vandenBergetal_2011}.
In Fig.~\ref{fig:examplesB}~(D) we express the \emph{Differential-drive mobile robot} system as a program.
The program variables $x$ and~$y$ represent the robot's position.
Its orientation is captured by the variable $\theta$.
The velocities are constant $1$ for the left wheel and constant $3$ for the right wheel.
The random variables $\Omega_r$ and~$\Omega_l$ model external disturbances.
For more details, we refer to \cite{Jasouretal2021}.

\medskip

{\bf Mobile robotic arm.}
The system, studied in \cite{Jasouretal2021,Jasouretal2014,JasourF10}, models the uncertain position of the end-effector of a mobile robotic arm as a function of the uncertain base position and uncertain joint angles.
Fig.~\ref{fig:examplesB} (F) shows the system as a program.
The program variables $\theta_1$, $\theta_2$, and $\theta_3$ represent the uncertain angles of three joints.
The distributions of $\theta_1$ and $\theta_2$ are uniform and normal, respectively, while~$\theta_3$ is gamma distributed with shape parameter $1$ and scale parameter $2$.
The position of the end-effector in 3D space is given by the variables $x_E$, $y_E$, and $z_E$.
The uncertain position of the base in 3D space is modeled by three different distributions (uniform, normal, beta) in the assignments of $x_E$, $y_E$, and $z_E$.
For more details, we refer to \cite{Jasouretal2021}.

\medskip

{\bf Stochastic decay.}
The program in Fig.~\ref{fig:examplesB} (E) models  exponential decay with a non-constant stochastic decay rate.
Variable $m$ represents the total quantity subject to decay, where $m_0$ is the initial quantity.
The decay rate $\lambda$ starts off at $0$ and changes according to a normal distribution at every time step.

\begin{table}[t!]
\centering
\setlength\belowcaptionskip{8pt}
\footnotesize
\renewcommand{\arraystretch}{1.1}
\setlength\abovecaptionskip{8pt}
\setlength{\tabcolsep}{3pt}

\begin{tabular}{@{}lccccccccc@{}}
\rotatebox{0}{\textbf{Benchmark}} & \rotatebox{0}{\textbf{Target}} & \rotatebox{0}{\textbf{Poly form}} & \rotatebox{0}{\textbf{Sim.}}  & \rotatebox{0}{\textbf{Exact}} &  \multicolumn{3}{c}{\textbf{PCE estimate}}\\ 
 &  &  &  &   & \rotatebox{0}{\textbf{Deg.}} & \rotatebox{0}{\textbf{Result}} & \rotatebox{0}{\showclock{0}{0} \textbf{Runtime}} \\
\toprule

\begin{tabular}{@{}l@{}} Taylor rule \\ model \end{tabular} &
\begin{tabular}{@{}c@{}} $\E \left(i_n\right)$ \\ $n{=}20$ \end{tabular} & 
$\times$ &
0.022998 &
$\times$

&
\begin{tabular}{@{}c@{}} $3$ \\ $5$ \\ $9$ \end{tabular} & 
\begin{tabular}{@{}c@{}} $0.02278$ \\ $0.02295$ \\ $0.02300$ \end{tabular} &
\begin{tabular}{@{}c@{}}  $0.4s{+}0.5s$ \\  $0.5s{+}5.0s$ \\  $5.9s{+}34.6s$ \end{tabular} \\ 
\midrule

\begin{tabular}{@{}l@{}} Turning vehicle \\ model \end{tabular} &
\begin{tabular}{@{}c@{}} $\E \left(x_n\right)$ \\ $n{=}20$ \end{tabular} &
$\times$ &
15.60666 &
\begin{tabular}{@{}c@{}} \\ 15.60760 \\ \showclock{0}{37} 1.9s \end{tabular} &
\begin{tabular}{@{}c@{}} $3$ \\ $5$ \\ $9$ \end{tabular} &
\begin{tabular}{@{}c@{}} $14.44342$ \\ $15.43985$ \\ $15.60595$ \end{tabular} &
\begin{tabular}{@{}c@{}} $0.6s{+}3.6s$ \\ $1.4s{+}9.2s$ \\  $15.6s{+}16.1s$ \end{tabular} \\ 
\midrule

\begin{tabular}{@{}l@{}} Turning vehicle \\ model (trunc.) \end{tabular} &
\begin{tabular}{@{}c@{}} $\E \left(x_n\right)$ \\ $n{=}20$ \end{tabular} &
\begin{tabular}{@{}c@{}} for deg. 2 \\ $\left[-3 \cdot 10^5, 3 \cdot 10^5\right]$ \\ \showclock{0}{37} 1057s \end{tabular} &
15.60818 &
\begin{tabular}{@{}c@{}} \\ 15.60760 \\ \showclock{0}{37} 89.2s \end{tabular} &
\begin{tabular}{@{}c@{}} $3$ \\ $5$ \\ $9$ \end{tabular} &
\begin{tabular}{@{}c@{}} $14.44342$ \\ $15.43985$ \\ $15.60595$ \end{tabular} &
\begin{tabular}{@{}c@{}} $0.6s{+}3.6s$ \\ $1.4s{+}9.1s$ \\  $15.6s{+}15.8s$ \end{tabular} \\
\midrule

\begin{tabular}{@{}l@{}} Rimless wheel \\ walker \end{tabular} &
\begin{tabular}{@{}c@{}} $\E \left(x_n\right)$ \\ $n{=}2000$ \end{tabular} &
\begin{tabular}{@{}c@{}} for deg. 2 \\ $\left[1.791,1.792\right]$ \\ \showclock{0}{46} 5.42s \end{tabular} &
1.79173 &
\begin{tabular}{@{}c@{}} \\ 1.79159 \\ \showclock{0}{37} 8.0s \end{tabular} &
\begin{tabular}{@{}c@{}} $1$ \\ $2$ \\ $3$ \end{tabular} &
\begin{tabular}{@{}c@{}} $1.79159$ \\ $1.79159$ \\ $1.79159$ \end{tabular} &
\begin{tabular}{@{}c@{}}  $0.2s{+}0.5s$ \\ $0.3s{+}0.4s$ \\  $0.6s{+}0.6s$ \end{tabular} \\ 
\midrule

\begin{tabular}{@{}l@{}} Robotic arm \\ model \end{tabular} &
\begin{tabular}{@{}c@{}} $\E \left(x_n\right)$ \\ $n{=}100$ \end{tabular} &
\begin{tabular}{@{}c@{}} for deg. 2 \\ $\left[268.87,268.88\right]$ \\ \showclock{0}{36} 2.74s \end{tabular} &
268.852 &
\begin{tabular}{@{}c@{}} \\ 268.85236 \\ \showclock{0}{37} 5.6s \end{tabular} &
\begin{tabular}{@{}c@{}} $1$ \\ $2$ \\ $3$ \end{tabular} &
\begin{tabular}{@{}c@{}} $268.85236$ \\ $268.85236$ \\ $268.85236$ \end{tabular} &
\begin{tabular}{@{}c@{}}  $1.3s{+}0.3s$ \\  $2.5s{+}0.6s$ \\  $4.8s{+}0.7s$ \end{tabular} \\

\midrule

\rowcolor{tbl-row-color}
\begin{tabular}{@{}l@{}} Uncertain \\ underwater vehicle \end{tabular} &
\begin{tabular}{@{}c@{}} $\E \left(x^{2}_n\right)$ \\ $n{=}10$ \end{tabular}
&
\begin{tabular}{@{}c@{}} for deg. 2 \\ $\left[1.9817,2.0252\right]$ \\ \showclock{0}{36} 2.9s \end{tabular} &
\begin{tabular}{@{}c@{}} 2.00332
\end{tabular}
&
\begin{tabular}{@{}c@{}} \\ 2.00339 \\ \showclock{0}{37} 0.6s \end{tabular}
&
\begin{tabular}{@{}c@{}} $3$ \\ $5$ \\ $8$ \end{tabular} &
\begin{tabular}{@{}c@{}} $2.08986$ \\ $2.04514$ \\ $2.00432$ \end{tabular} &
\begin{tabular}{@{}c@{}}  $0.1s{+}0.9s$ \\ $0.1s{+}2.8s$ \\  $0.6s{+}8.6s$ \end{tabular} \\ 
\midrule

\rowcolor{tbl-row-color}
\begin{tabular}{@{}l@{}} Planar aerial \\ vehicle \end{tabular} &
\begin{tabular}{@{}c@{}} $\E \left(y_n\right)$ \\ $n{=}10$ \end{tabular} &
\begin{tabular}{@{}c@{}} for deg. 2 \\ $\left[1.4306,1.4315\right]$ \\ \showclock{0}{36} 4.1s \end{tabular} &
1.43111 &
$\times$
&
\begin{tabular}{@{}c@{}} $6$ \\ $8$ \\ $10$ \end{tabular} &
\begin{tabular}{@{}c@{}} $1.42184$ \\ $1.43016$ \\ $1.43099$ \end{tabular} &
\begin{tabular}{@{}c@{}}  $0.2s{+}5.9s$ \\ $0.6s{+}13.7s$ \\  $2.1s{+}28.0s$ \end{tabular} \\ 
\midrule

\rowcolor{tbl-row-color}
\begin{tabular}{@{}l@{}} 3D
aerial \\ vehicle \end{tabular} &
\begin{tabular}{@{}c@{}} $\E \left(x_n\right)$ \\ $n{=}20$ \end{tabular} &
$\times$ &
0.67736 &
\begin{tabular}{@{}c@{}} \\ 0.67770 \\ \showclock{0}{37} 4.9s \end{tabular} &
\begin{tabular}{@{}c@{}} $3$ \\ $5$ \\ $8$ \end{tabular} &
\begin{tabular}{@{}c@{}} $0.47805$ \\ $0.65280$ \\ $0.67245$ \end{tabular} &
\begin{tabular}{@{}c@{}}  $0.1s{+}1.5s$ \\ $0.1s{+}5.7s$ \\  $0.6s{+}30.5s$ \end{tabular} \\ 
\midrule

\rowcolor{tbl-row-color}
\begin{tabular}{@{}l@{}} Differential-drive  \\ mobile robot \end{tabular} &
\begin{tabular}{@{}c@{}} $\E \left(x_n^{2}\right)$ \\ $n{=}25$ \end{tabular} &
$\times$ &
0.29175 &
\begin{tabular}{@{}c@{}} \\ 0.29151 \\ \showclock{0}{37} 12.0s \end{tabular} &
\begin{tabular}{@{}c@{}} $8$ \\ $10$ \\ $12$ \end{tabular} &
\begin{tabular}{@{}c@{}} $0.19919$ \\ $0.29310$ \\ $0.29215$ \end{tabular} &
\begin{tabular}{@{}c@{}}  $0.6s{+}9.5s$ \\ $2.1s{+}13.8s$ \\  $8.3s{+}22.4s$ \end{tabular} \\ 
\midrule

\rowcolor{tbl-row-color}
\begin{tabular}{@{}l@{}} Mobile Robotic
 \\ Arm  \end{tabular} &
\begin{tabular}{@{}c@{}} $\E \left(x_n\right)$ \\ $n{=}2000$ \end{tabular} &
$\times$ &
0.38413 &
\begin{tabular}{@{}c@{}} \\ 0.38535 \\ \showclock{0}{37} 0.2s \end{tabular} &
\begin{tabular}{@{}c@{}} $2$ \\ $3$ \\ $4$ \end{tabular} &
\begin{tabular}{@{}c@{}} $0.38535$ \\ $0.38535$ \\ $0.38535$ \end{tabular} &
\begin{tabular}{@{}c@{}}  $0.8s{+}0.2s$ \\ $1.3s{+}0.3s$ \\  $2.0s{+}0.5s$ \end{tabular} \\
\midrule

\rowcolor{tbl-row-color}
\begin{tabular}{@{}l@{}} Stochastic decay  \end{tabular} &
\begin{tabular}{@{}c@{}} $\E \left(m_n\right)$ \\ $n{=}10$ \end{tabular} &
$\times$ &
5031.8404 &
\begin{tabular}{@{}c@{}} \\ 5028.3158 \\ \showclock{0}{37} 0.3s \end{tabular} &
\begin{tabular}{@{}c@{}} $6$ \\ $8$ \\ $10$ \end{tabular} &
\begin{tabular}{@{}c@{}} $ 5035.7468$ \\ $5028.0312$ \\ $ 5028.3222$ \end{tabular} &
\begin{tabular}{@{}c@{}}  $1.9s{+}1.0s$ \\ $4.7s{+}1.6s$ \\  $15.6s{+}2.0s$ \end{tabular} \\

\bottomrule
\end{tabular}
\caption{
Evaluation of our approach on $11$ benchmarks.
Poly form = the interval for the target as reported by \cite{Srirametal2020};
Sim = target approximated through $10^6$ samples;
Exact = the target result computed by our technique for exact moment derivation;
Deg. = maximum degrees used for the approximation of the non-linear functions;
Result = result of our approximate method per degree;
Runtime = execution time of our method in seconds (time of PCE + time of \textsc{Polar});
$\times$ = the respective method is not applicable.
The benchmarks in grey are new relative to \cite{KofnovMSBB22}.
}
\label{tab:LiOModl}
\end{table}

\medskip

Fig. \ref{fig:aside} illustrates the performance of our PCE-based approach as a function of the polynomial degree of our approximation on the \textit{Taylor rule}. The approximations to the true first moment (in red) are plotted in the left panel and the relative errors, calculated as $rel.err = |est - true|/true$, for the first and second moments  in the middle and  right panels, respectively, over iteration number. 
All plots show that the approximation error is low and deteriorates as the polynomial degree increases from 3 to 9, across iterations. For this benchmark, the drop is sharper for the second moment.

The \emph{Rimless wheel walker} and the \emph{Robotic arm} models are the only two benchmarks from \cite{Srirametal2020} with nonlinear non-polynomial updates.
Polynomial forms of degree $2$ were used to compute bounding intervals for $\E(x_n)$ (for fixed $n$) for these two models.
The \cite{Srirametal2020} tool  supports neither the approximation of logarithms (required for the \emph{Taylor rule model}) nor distributions with unbounded support (required for all benchmarks except for the \emph{Taylor rule model} on which the tool fails).
To facilitate comparison with polynomial forms, our set of benchmarks is augmented with a version of the \emph{Turning vehicle model} using truncated normal distributions ($\psi$ and $w_2 \sim TruncNormal(0, 0.01, \left[-1, 1\right])$ in Fig. \ref{fig:turningexample}), which is called \emph{Turning vehicle model (trunc.)}  in Table~\ref{tab:LiOModl}), instead of normal distributions with unbounded support.

Among the eleven benchmark models in Table \ref{tab:LiOModl}, the polynomial form tool of \cite{Srirametal2020} can be used to approximate moments only in five, namely the \emph{Turning vehicle model (trunc.)},  \emph{Rimless wheel walker},  \emph{Robotic arm}, \emph{Uncertain underwater vehicle}, and \emph{Planar aerial vehicle}. 
Our method for exact moment derivation supports trigonometric functions and the exponential function but no logarithms. Hence, it is not applicable to the \emph{Taylor rule model}.
Moreover, our exact method cannot be applied to the \emph{Planar aerial vehicle} benchmark because the perturbation of its program variable $\theta$ is not iteration-stable and $\theta$ is used as an argument to a trigonometric function. Our PCE-based moment estimation approach applies to all.

The \emph{Robotic arm},  \emph{Rimless wheel walker}, and \emph{Mobile robotic arm} models contain no stochastic accumulation: each basic random variable is iteration-stable and can be estimated using the scheme in Section~\ref{sec:stable}.
Therefore, for these benchmarks, our estimates converge exponentially fast to the true values.
In fact, our estimates coincide with the true values for first moments, because the estimators are unbiased.
The other benchmarks contain stochasticity accumulation, which leads to the instability of the distributions of basic random variables.
For these benchmarks, we apply the scheme in Section~\ref{sec:non-stable}.

Table~\ref{tab:LiOModl} contains the evaluation results of our approximate and exact approaches, and of the technique based on \textit{polynomial forms} of \cite{Srirametal2020} on the eleven benchmarks. In consecutive order, the table columns are: the name of the benchmark model; the target moment and iteration; the \textit{polynomial form} results (estimation interval and runtime), if applicable; the sampling-based value of the target moment;  the exact moment and the runtime of its calculation, if applicable;  the truncation parameter (polynomial degree) in PCE;  the PCE estimate value;  and the PCE estimate calculation runtime. 

Our results illustrate that our method based on PCE is able to accurately approximate general non-linear dynamics for challenging programs.
Specifically, for the \emph{Rimless wheel walker} model, our first moment estimate coincides with the exact result and falls in the interval estimate of the polynomial forms technique.
For the \emph{Robotic arm model}, our results are equal to the exact result and closer to the sampling  one based on $10^6$ samples. They lie outside the interval predicted by the polynomial forms technique, pointing to the latter's lack of accuracy in this model.

Our method for exact moment derivation can be faster than the polynomial form technique and our PCE-based approximation approach, for instance, for the \emph{Turning vehicle model}.
Nevertheless, if all basic random variables are iteration-stable, our approximation approach will provide an unbiased estimation and hence the exact result for the first moments.
This is the case, for example, for the \emph{Rimless wheel walker} benchmark for which our approximation method provides the true result in under $0.7$s, compared to our exact moment derivation method which needs $8$s.

Our experiments also demonstrate that our PCE-based method provides accurate approximations in a fraction of the time required by the polynomial form based technique.
While polynomial forms compute an error interval, they need to be computed on an iteration-by-iteration basis.
In contrast, our method based on PCE and Prob-solvable loops computes an expression for the target parameterized by the loop iteration $n \in \mathbb{N}$ (cf. Fig.~\ref{fig:turningexample}).
As a result, increasing the target iteration~$n$ does \emph{not} increase the runtime of our approach.
To see this,  consider the \emph{Uncertain underwater vehicle} benchmark:
the runtimes of polynomial forms and of our approach using the PCE estimate of order~$5$ are comparable~($2.9$s).
However, increasing the target iteration~$n$ from~$10$ to~$20$ escalates the runtime of polynomial forms to~$237$s while the runtimes of both our approaches (approximate and exact) remain the same.

\section{Conclusion}\label{sec:conclusion} 
We present two methods, one exact and one approximate, to compute the state variable moments in closed-form in probabilistic loops with non-polynomial updates.
Our approximation method is based on polynomial chaos expansion to approximate non-polynomial general functional assignments.
The approximations produced by our technique have optimal exponential convergence when the parameters of the general non-polynomial functions have distributions that are stable across all iterations.
We derive an upper bound on the approximation error for the case of unstable parameter distributions.
Our exact method is applicable to probabilistic loops with trigonometric and exponential assignments if the random perturbations of the arguments of the non-linear functions are independent across iterations.

Our methods can accommodate non-linear, non-polynomial updates in classes of probabilistic loops amenable to automated moment computation, such as the class of Prob-solvable loops.
We  emphasize  that our PCE-based approximation is not limited to Prob-solvable loops and can be applied to approximate non-linear dynamics in more general probabilistic loops. 

Our experiments demonstrate the ability of our methods to characterize non-polynomial behavior in stochastic models from various domains via their moments, with high accuracy and in a fraction of the time required by other state-of-the-art tools.
In future work, we plan to investigate how to use these solutions to automatically compute stability properties (e.g. Lyapunov stability and asymptotic stability) in stochastic dynamical systems.


\begin{acks}
The research in this paper has been funded by the Vienna Science and Technology Fund (WWTF) [10.47379/ICT19018], the TU Wien Doctoral College (SecInt), the FWF research
projects LogiCS W1255-N23 and P 30690-N35, and the ERC Consolidator Grant ARTIST 101002685.
\end{acks}
  
\bibliographystyle{ACM-Reference-Format}
\bibliography{refs}

\appendix

\section{Proof of Theorem \ref{err-bound}}\label{app:A}

\begin{proof}[Thm. \ref{err-bound}]
Since $f(z) = 0$ $\forall z\notin \left[a, b\right]$,
\begin{gather}
  \left \| g(z) - \sum\limits_{i=0}^{T}c_{i}p_{i}(z)\right \|_{f}^{2} = 
\int\limits_{a}^{b}\left(g(z) - \sum\limits_{i=0}^{T}c_{i}p_{i}(z)\right)^{2}f(z)dz 
=\int\limits_{-\infty}^{\infty}\left(g(z) - \sum\limits_{i=0}^{T}c_{i}p_{i}(z)\right)^{2}f(z)dz\notag \\
= \int\limits_{-\infty}^{a}\left(g(z) - \sum\limits_{i=0}^{T}c_{i}p_{i}(z)\right)^{2}f(z)dz 
+ \int\limits_{b}^{\infty}\left(g(z) - \sum\limits_{i=0}^{T}c_{i}p_{i}(z)\right)^{2}f(z)dz \notag \\
+ \int\limits_{a}^{b}\left(g(z) - \sum\limits_{i=0}^{T}c_{i}p_{i}(z)\right)^{2}f(z)dz  \notag \\
\leq \int\limits_{-\infty}^{a}\left(g(z) - \sum\limits_{i=0}^{T}c_{i}p_{i}(z)\right)^{2}\phi(z)dz 
 + \int\limits_{b}^{\infty}\left(g(z) - \sum\limits_{i=0}^{T}c_{i}p_{i}(z)\right)^{2}\phi(z)dz \notag \\
+ \int\limits_{a}^{b}\left(g(z) - \sum\limits_{i=0}^{T}c_{i}p_{i}(z)\right)^{2}\phi(z)dz  
+ \int_{a}^{b}\left(g(z) - \sum\limits_{i=0}^{T}c_{i}p_{i}(z)\right)^{2}(f(z)-\phi(z))dz \notag \\
= A + B + C + D \label{star}
\end{gather}
Since $ f(z) -\phi(z)  \leq  \phi(z) + f(z)$, $D$ satisfies 
\begin{gather*}
\int\limits_{a}^{b}\left(g(z) - \sum\limits_{i=0}^{T}c_{i}p_{i}(z)\right)^{2}(f(z)-\phi(z))dz  \leq \int\limits_{a}^{b}\left(g(z) - \sum\limits_{i=0}^{T}c_{i}p_{i}(z)\right)^{2}dz\int\limits_{a}^{b}(\phi(z) + f(z))dz  \\
= (1 + \Phi(b) - \Phi(a)) \times  \int\limits_{a}^{b}\left(g(z) - \sum\limits_{i=0}^{T}c_{i}p_{i}(z)\right)^{2}dz,
\end{gather*}
with $(1 + \Phi(b) - \Phi(a)) < 2.$
Now, $\forall z \in \left[a, b\right],$ $1 \leq \phi(z)/\min((\phi(a), \phi(b)))$, and hence 
\begin{gather}
   \int\limits_{a}^{b}\left(g(z) - \sum\limits_{i=0}^{T}c_{i}p_{i}(z)\right)^{2}dz 
   \leq  \min{(\phi(a), \phi(b))}^{-1}\int\limits_{a}^{b}\left(g(z) - \sum\limits_{i=0}^{T}c_{i}p_{i}(z)\right)^{2}\phi(x)dz \notag\\
\leq \min{(\phi(a), \phi(b))}^{-1} C. \label{2nd} 
\end{gather} 
By \eqref{2nd} and \eqref{truef}, \eqref{star} satisfies
\begin{gather}
 A+B+C+D \leq \left(\frac{2}{\min{(\phi(a), \phi(b))}} + 1 \right)  
\Bigg\{ \int\limits_{-\infty}^{a} \left(g(z) - \sum\limits_{i=0}^{T}c_{i}p_{i}(z)\right)^{2}\phi(z)dz \notag \\ + 
\int\limits_{b}^{\infty}\left(g(z) - \sum\limits_{i=0}^{T}c_{i}p_{i}(z)\right)^{2}\phi(z)dz 
 +  \int\limits_{a}^{b}\left(g(z) - \sum\limits_{i=0}^{T}c_{i}p_{i}(z)\right)^{2}\phi(z)dz  \Bigg\}  \notag \\
= \left(\frac{2}{\min{(\phi(a), \phi(b))}} + 1 \right) 
  \int\limits_{-\infty}^{\infty}\left(g(z) - \sum\limits_{i=0}^{T}c_{i}p_{i}(z)\right)^{2}\phi(z)dz \notag\\
   = \left(\frac{2}{\min{(\phi(a), \phi(b))}} + 1 \right)\sum\limits_{i=T+1}^{\infty}c_{i}^{2} \leq \left( \frac{2}{\min{(\phi(a), \phi(b))}} + 1 \right) \var_{\phi}\left[g(Z)\right]  \notag 
\end{gather} 
since $\var_\phi(g(Z))=\sum_{i=1}^\infty c_i^2$.
In consequence, the error \eqref{approxer} 
can be upper bounded by \eqref{bound}.
\end{proof}

\section{PCE of exponential and trigonometric functions}\label{app:C}

Table \ref{tab:LiOFunc} lists examples of functions of up to three random arguments approximated by PCE's of different degrees and, correspondingly, number of coefficients. We use $TruncNormal \left(\mu, \sigma^{2}, \left[a, b\right]\right)$ to denote the truncated normal distribution with expectation $\mu$ and  standard deviation $\sigma$ on the (finite or infinite) interval $\left[a, b\right]$, and $TruncGamma\left(\theta, k, \left[a, b\right]\right)$ for the  truncated gamma distribution on the (finite or infinite) interval $\left[a, b\right]$, $a,b>0$, with shape parameter $k$ and scale parameter $\theta$.
The  approximation error in \eqref{error_std} is reported in the last column. The results confirm \eqref{convergence} in practice: the error decreases as the degree or, equivalently, the number of  components in the approximation of the polynomial increases.  

\begin{table}[h!]
\centering
\setlength\belowcaptionskip{8pt}
\resizebox{\textwidth}{!}{
\begin{tabular}{@{}llcc@{}} 
 \toprule
 Function & Random Variables & Degree / \#coefficients & Error\\ \midrule
 
 \begin{tabular}{@{}l@{}}$f(x_{1}, x_{2}) = \xi e^{-x_{1}} + (\xi - \frac{\xi^{2}}{2}) e^{x_{2}-x_{1}}$ \\ $\xi = 0.3$ \end{tabular} & 
  \begin{tabular}{@{}l@{}}$x_{1} \sim Normal(0, 1),$ \\ $x_{2} \sim Normal(2, 0.01)$ \end{tabular} & 
  \begin{tabular}{@{}c@{}} 1 / 4 \\ 2 / 9 \\ 3 / 16 \\ 4 / 25 \\ 5 / 36 \end{tabular} & 
  \begin{tabular}{@{}c@{}} 3.076846 \\ 1.696078 \\ 0.825399 \\ 0.363869 \\ 0.270419 \end{tabular} \\ 
  \hline

 $f(x_{1}, x_{2}) = 0.3 e^{x_{1} - x_{2}} + 0.6 e^{ - x_{2}}$ & 
  \begin{tabular}{@{}l@{}}$x_{1} \sim TruncNormal(4, 1, \left[3, 5\right]),$ \\ $x_{2} \sim TruncNormal(2, 0.01, \left[0, 4\right])$ \end{tabular} & 
  \begin{tabular}{@{}c@{}} 1 / 4 \\ 2 / 9 \\ 3 / 16 \\ 4 / 25 \\ 5 / 36 \end{tabular} & 
  \begin{tabular}{@{}c@{}} 0.343870 \\ 0.057076 \\ 0.007112 \\ 0.000709 \\ 0.000059 \end{tabular} \\ 
  \hline

 $f(x_{1}, x_{2}) = e^{x_{1}x_{2}}$ & 
 \begin{tabular}{@{}l@{}}$x_{1} \sim TruncNormal(4, 1, \left[3, 5\right])$ \\ $x_{2} \sim TruncGamma(3, 1, \left[0.5, 1\right])$ \end{tabular} & 
 \begin{tabular}{@{}c@{}} 1 / 4 \\ 2 / 9 \\ 3 / 16 \\ 4 / 25 \\ 5 / 36 \end{tabular} & 
 \begin{tabular}{@{}c@{}} 5.745048 \\ 1.035060 \\ 0.142816 \\ 0.016118 \\ 0.001543 \end{tabular}\\ 
 \hline

 \begin{tabular}{@{}l@{}}$f(x_{1}, x_{2}, x_{3}) = 0.3 e^{x_{1} - x_{2}} + $ \\ \hspace{2.25cm} $ 0.6e^{x_{2} - x_{3}} +  0.1 e^{x_{3} - x_{1}}$  \end{tabular} &
 \begin{tabular}{@{}l@{}}$x_{1} \sim TruncNormal(4, 1, \left[3, 5\right])$ \\ $x_{2} \sim TruncGamma(3, 1, \left[0.5, 1\right])$ \\ $x_{3} \sim U\left[4, 8\right]$\end{tabular} & 
 \begin{tabular}{@{}c@{}} 1 / 8 \\ 2 / 27 \\ 3 / 64 \end{tabular} & 
 \begin{tabular}{@{}c@{}} 1.637981 \\ 0.303096 \\ 0.066869 \end{tabular}\\
 \hline

 \begin{tabular}{@{}l@{}}$f(x_{1}) = \psi cos(x_{1}) + (1 - \psi)sin(x_{1})$ \\ $\psi = 0.3$  \end{tabular} & $x_{1} \sim Normal(0, 1)$ & 
 \begin{tabular}{@{}c@{}} 1 / 2 \\ 2 / 3 \\ 3 / 4 \\ 4 / 5 \\ 5 / 6 \end{tabular} &  
  \begin{tabular}{@{}c@{}} 0.222627 \\ 0.181681 \\ 0.054450 \\ 0.039815 \\ 0.009115 \end{tabular}\\ 
\bottomrule
\end{tabular}
}
\caption{
Approximations of $5$ non-linear functions using PCE.
}
\label{tab:LiOFunc}
\end{table}

\section{Trigonometric Identities}\label{app:D}

We use the  following properties of $\sin$, $\cos$ and $\exp$ functions. 
\begin{align*}
    \sin(\alpha \pm \beta) = \sin(\alpha)\cos(\beta) \pm \cos(\alpha)\sin(\beta)\\
    \cos(\alpha \pm \beta) = \cos(\alpha)\cos(\beta) \mp \sin(\alpha)\sin(\beta)\\
    \sin(n\alpha) = \sum_{\substack{r = 0,\\ 2r+1 \leq n}}(-1)^{r}\binom{n}{2r + 1}\cos^{n - 2r - 1}(\alpha)\sin^{2r + 1}(\alpha), n \in \nat \\
    \cos(n\alpha) = \sum_{\substack{r = 0,\\ 2r \leq n}}(-1)^{r}\binom{n}{2r}\cos^{n - 2r}(\alpha)\sin^{2r}(\alpha), n \in \nat \\
\end{align*}

\end{document}